\newcommand{\eq}[1]{\hyperref[eq:#1]{(\ref*{eq:#1})}}
\renewcommand{\sec}[1]{\hyperref[sec:#1]{Section~\ref*{sec:#1}}}
\newcommand{\app}[1]{\hyperref[app:#1]{Appendix~\ref*{app:#1}}}
\newcommand{\thm}[1]{\hyperref[thm:#1]{Theorem~\ref*{thm:#1}}}
\newcommand{\lem}[1]{\hyperref[lem:#1]{Lemma~\ref*{lem:#1}}}
\newcommand{\propo}[1]{\hyperref[prop:#1]{Proposition~\ref*{prop:#1}}}
\newcommand{\defn}[1]{\hyperref[defn:#1]{Definition~\ref*{defn:#1}}}
\newcommand{\ex}[1]{\hyperref[ex:#1]{Example~\ref*{ex:#1}}}
\newcommand{\fct}[1]{\hyperref[fct:#1]{Fact~\ref*{fct:#1}}}
\newcommand{\fig}[1]{\hyperref[fig:#1]{Figure~\ref*{fig:#1}}}
\newcommand{\figs}[1]{\hyperref[fig:#1]{Figures~\ref*{fig:#1}}}
\newcommand{\cas}[1]{\hyperref[case:#1]{Case~\ref*{case:#1}}}
\theoremstyle{plain}
\newtheorem{theorem}{Theorem}
\newtheorem{lemma}{Lemma}
\newtheorem{definition}{Definition}
\newtheorem{prop}{Proposition}
\theoremstyle{remark}
\newtheorem{case}{Case}
\newcommand{\ket}[1]{|{#1}\rangle}
\newcommand{\bbra}[1]{\big\langle{#1}\big|}
\newcommand{\bket}[1]{\big|{#1}\big\rangle}
\newcommand{\CC}{\mathbb{C}}
\newcommand{\II}{\mathbb{I}}
\newcommand{\RR}{\mathbb{R}}
\newcommand{\clif}{\mathcal{C}}
\newcommand{\paul}{\mathcal{P}}
\newcommand{\XX}{\mathcal{X}}
\newcommand{\unitary}{\mathcal{U}}
\newcommand{\EE}{\mathcal{E}}
\DeclareMathOperator{\tr}{Tr}
\DeclareMathOperator{\poly}{poly}
\DeclareMathOperator{\modwithnospace}{mod}
\newcommand{\Mod}[1]{\ (\modwithnospace\ #1)}
\title{The Clifford group forms a unitary 3-design}
\author{Zak Webb$^{1}$} \email{zakwwebb@gmail.com}
\address{$^1$ Institute for Quantum Computing and Department of Physics \& Astronomy, University of Waterloo}
\begin{document}

\maketitle 
\date{}


\begin{abstract} Unitary $k$-designs are finite ensembles of unitary matrices that approximate the Haar distribution over unitary matrices. Several ensembles are known to be 2-designs, including the uniform distribution over the Clifford group, but no family of ensembles was previously known to form a 3-design.  We prove that the Clifford group is a 3-design, showing that it is a better approximation to Haar-random unitaries than previously expected.  Our proof strategy works for any distribution of unitaries satisfying a property we call Pauli 2-mixing and proceeds without the use of heavy mathematical machinery.  We also show that the Clifford group does not form a 4-design, thus characterizing how well random Clifford elements approximate Haar-random unitaries.  Additionally, we show that the generalized Clifford group for qudits is not a 3-design unless the dimension of the qudit is a power of 2.
\end{abstract}


\section{Introduction\label{sec:intro}}
The Clifford group is ubiquitous in quantum information.  Quantum error correction \cite{Pres98}, classical simulatibility \cite{Got99}, and randomized benchmarking \cite{EWSLC03} are only some of the areas in which the Clifford group plays a major role.  Because of this wide range of use, the Clifford group is often the second group examined when looking for a group of unitaries with some desired property (the first being the Pauli group).  

One such instance is in the attempt to approximate Haar-random unitaries.  A unitary $k$-design is an ensemble of unitaries such that the $k$-th moment of the ensemble is equal to the $k$-th moment of a Haar-random unitary.  It is easy to show that the uniform ensemble over the Pauli group forms a 1-design but that it does not form a 2-design \cite{RS09}.   In an attempt to find a better approximation to Haar-random unitaries, it was then shown that the uniform ensemble over the Clifford group forms a 2-design \cite{DLT02,DCEL09}.

Unitary $k$-designs are inspired from state $k$-designs, in which an ensemble of states is used to approximate the $k$-th moment of a Haar-random state.  Unitary designs are a generalization of state designs, as a state $k$-design can be derived from a unitary $k$-design by examining the orbit of any pure state under the elements of the unitary design.  State designs have found applications in state discrimination protocols \cite{AE07}, the phase lift algorithm \cite{GKK15}, and informationally complete measurements \cite{Scott06}.  Recently, Kueng and Gross showed that the stabilizer states (the orbit of any eigenvector of a Pauli element under the action of the Clifford group) form an exact state 3-design \cite{KG13,KG15}.  

The motivation behind approximating Haar-random unitaries arises from the simple fact that the symmetries of the Haar measure make analysis of protocols utilizing it simple, but the size of the measure makes it infeasible to sample from.  In order to sample from the Haar distribution with error $\epsilon$, we would need to place an $\epsilon$-net over the set of all unitaries, and a counting argument shows that some of these unitaries would require an exponential-length circuit over any fixed finite universal gate set.  The intuition behind $k$-designs is that we can perform an analysis assuming a Haar-random distribution, but claim that the same analysis holds for the design (assuming $k$ is large enough).  We then have a distribution that satisfies whatever property we were interested in, but is feasible to sample from.

Unitary 2-designs have found application in data hiding \cite{DLT02} and estimating the fidelity of channels \cite{DHCB05}, and thus effort has gone into their efficient construction with a near-linear time construction the best currently known \cite{DCEL09,CLLW15}.  For larger designs, unitary 3-designs have been used to show that quantum speed-ups occur for most unitaries \cite{BH13}, while unitary 4-designs have found application for state descrimination \cite{AE07} and relating state equilibriation to circuit length \cite{BCHHKM12,MRA13,BHH12}.  However, as the known applications of $k$-designs for $k>2$ are not as experimentally useful as $2$-designs, little successful effort has gone into finding examples of these exact designs.  The only previously known exact $k$-designs for $k>2$ were found numerically by analyzing the characters of finite groups \cite{GAE07,RS09}, and thus we have several examples of $3$-, $4$-, and $5$-designs on constant sized systems.  However, these methods do not generalize, and thus we do not currently know of any family of unitary $k$-designs for $k>2$.  This is in contrast with state designs, as the stabilizer states form 3-designs \cite{KG13,KG15}, as mentioned above.

This lack of exact $k$-designs has lead the community to the construction of approximate $k$-designs \cite{HL09}.  Related to quantum expanders \cite{DJNR09}, these approximate designs are defined so that the $k$-fold twirl over the ensemble is close to that of Haar-random unitaries in the completely bounded trace norm (also known as the diamond norm).  Further, it has been shown that random circuits of length $\poly(k)$ form approximate $k$-designs  \cite{HL09,BH13,BHH12}, giving an explicit (and simple) construction for these approximate designs.  As there exist approximate $k$-designs for arbitrary $k$, applications such as fooling small circuits, fast quantum equilibration, and the generation of topological order \cite{BHH12} are known, but these applications are usually interesting in the case of growing $k$.

In this paper, we show that the Clifford group forms an exact unitary 3-design.  We actually prove a slightly stronger statement: any ensemble of unitary matrices that is Pauli 2-mixing is a 3-design, where Pauli 2-mixing is a generalization of Pauli mixing to pairs of Pauli elements.  With a proof that the uniform ensemble of Clifford unitaries is Pauli 2-mixing, this shows that the Clifford group is a 3-design.  Our proof proceeds in a manner analogous to \cite{DLT02,CLLW15}, in that we examine the action of the $k$-fold Haar-random unitary twirl channel on some well-chosen operators.  We then examine the 3-fold twirl by a Pauli 2-mixing ensemble, and show that these two channels must be equal, which then proves that the ensemble is a 3-design.

We also show that no ensemble of Clifford unitaries forms a 4-design, and in particular that the result holds for the Clifford group.  This allows us to precisely characterize how well the uniform distribution over the Clifford group approximates a Haar-random unitary.  Further, this result shows that any search for exact 4-designs will require the examination of non-group sets of unitaries or groups of unitaries other than the Clifford group, since the only group on $n$ qubits properly containing the Clifford group is infinite.

After discovering that the uniform ensemble over the Clifford group is a 3-design, the author became aware of independent work by Huangjun Zhu \cite{Zhu15}, which proves a related result using representation theory.  In particular, Zhu's results show that the Clifford group is an exact 3-design, that no subgroup of the Clifford group is a 3-design (except on two qubits), and that the generalized Clifford group on qudits is not a 3-design (in addition to an application explaining why no discrete Wigner function is covariant with respect to the Clifford group).  Our work is complementary, in that our proof strategy works for arbitrary distributions of Clifford elements and the analysis of our proof gives intuition on why particular ensembles do not form $k$-designs.  

While examining Zhu's work, the author became interested in why our proof fails for qudits.  The discrepancy arises from the fact that there are $d$ possible commutation relations between two generalized Pauli operators, while there are only $2$ order-3 permutations in $S_3$.  This means that certain terms cannot be cancelled and our result does not hold for qudits.  Using this analysis, we give an alternative proof that the generalized Clifford group is not a 3-design.  (We actually prove that any ensemble of generalized Clifford operators is not a 3-design, in an analogous manner to how we prove that any ensemble of Clifford operators is not a 4-design.)

It should be noted that our proof that a Pauli 2-mixing ensemble forms a 3-design only requires an understanding of the Pauli group and the mathematical notation for operators and channels; no heavy mathematical machinery is required.  While a complete understanding of the $k$-fold Haar-random unitary twirl channel generally requires a use of Von Neumann's double commutant theorem or representation theory, our proof avoids this complete characterization.  (Our proof that the Clifford group does not form a 4-design and our proof that the generalized Clifford group does not form a 3-design both use Von Neumann's double commutant theorem to characterize the channel, but in an easily-understood manner.)

We now give a brief overview of the paper.  In \sec{math_prelim}, the requisite mathematical background and useful operators are defined and explained, including the definition of $k$-designs and Pauli 2-mixing.  In \sec{proof} we prove that any Pauli 2-mixing ensemble is a 3-design, which then proves that the Clifford group is a 3-design.  We show in \sec{not_4_design} that the Clifford group does not form a 4-design, and in \sec{gen_Pauli_not_design} that the generalized Clifford group is not a 3-design.  Finally, we discuss these results and their relation with previous works, along with some avenues of future research, in \sec{discussion}.  We give a self-contained proof that any Pauli mixing ensemble is a 2-design in \app{pauli_mixing_2_design}.


\section{Mathematical Preliminaries\label{sec:math_prelim}}

Let $\XX$ be a $d$-dimensional complex Euclidean space (e.g., $\XX = \CC^d$).  We are interested in the linear operators acting on this space, denoted $L(\XX)$.  Also of interest to us are the unitary operators acting on $\XX$, denoted by $\mathcal{U}(\XX)\subset L(\XX)$ and explicitly defined as $\mathcal{U}(\XX) = \{U\in L(\XX): U U^\dag = \II_\XX\}$, where $U^\dag$ is the Hermitian conjugate of $U$ and $\II_\XX$ is the identity operator on $\XX$.  Additionally, the completely-positive, trace-preserving, linear operators acting on $L(\XX)$, denoted by $C(\XX)$, are the quantum channels taking $\XX$ to $\XX$.

For any two operators $A,B\in L(\XX)$, their inner product is defined as $\langle A,B\rangle = \tr(A^\dag B)$.  Note that this notation is extremely similar to the group theoretic expression $\langle g_i\rangle_{i\in[m]}$ denoting the group generated by the set of generators $\{g_i\}_{i\in[m]}$.  We shall use both notations in our paper, but the context should enable the reader to differentiate between the two.

Additionally, if $\mathcal{S}$ is some set, we represent ordered tuples of elements from $\mathcal{S}$ by the bolded operator $\mathbf{s}$.  We then represent the $i$th component of the tuple $\mathbf{s}$ by $s_i$.  We will often define the tuple $\mathbf{s}$ by specifying the elements $s_i$ without explicitly defining $\mathbf{s}$.

Further, for any $n\in \mathbb{N}$, we will define the set $[n] := \{0,1,\cdots, n-1\}$, as this is a simple example of an ordered set of size $n$.


\subsection{Pauli and Clifford groups}

On a single qubit, the Pauli group $\widetilde{\paul}_1$ is a subset of $L\big(\CC^2\big)$ defined as
\begin{equation}
 \widetilde{\paul}_1 :=  \big\langle i \II_{\CC^2}, \sigma_X, \sigma_Z\big\rangle, \qquad \text{ where } \qquad 
 \sigma_X := \begin{pmatrix} 0 & 1\\ 1 & 0\end{pmatrix} \quad \text{ and }\quad 
 \sigma_Z := \begin{pmatrix} 1 & 0 \\ 0 & -1 \end{pmatrix}.
\end{equation}
We can construct the Pauli group on $n$ qubits $\widetilde{\paul}_n \subset L\big(\CC^{2^n}\big)$ by taking the $n$-fold tensor product of the Pauli group on a single qubit (i.e., $\widetilde{\paul}_n := \widetilde{\paul}_1^{\otimes n}$).

We are usually uninterested in the overall phase of an operator, so we define $\paul_n := \widetilde{\paul}_n/\langle i \II_{\CC^{2^n}}\rangle$, where as representatives we take $p\in \paul_n$ with $p^2 = \II_{\CC^{2^n}}$.  Further, we often want to only consider the non-identity Pauli elements and we define $\widehat{\paul}_n := \paul_n\setminus\big\{\II_{\CC^{2^n}}\big\}$.  It will also be useful to name the non-identity Pauli elements that square to the identity: $\overline{\paul}_n := \big\{ \pm p : p\in \widehat{\paul}_n\big\}$.

Part of the reason why the Pauli group is integral to quantum information is the simple fact that $\paul_n$ forms an orthogonal basis for $L\big(\CC^{2^n}\big)$.  Therefore, to completely characterize the effect of a channel $\Phi\in C\big(\CC^{2^n}\big)$, one need only understand the effect of $\Phi$ on each element $p\in \paul_n$.  This result greatly reduces the effort of characterizing channels, which is helpful both as a theoretical tool and as an experimental procedure.  Along these lines, we can expand any operator $A\in L\big(\CC^{2^n}\big)$ in the Pauli basis as
\begin{equation}
  A = \frac{1}{2^n}\sum_{p\in\paul_n}  \langle p,A\rangle p,
\end{equation}
where the factor $2^{-n}$ arises from the normalization of the Pauli elements.

A useful fact about the Pauli group is that each $p\in\widehat{\paul}_n$ commutes with exactly half of the elements of $\paul_n$, and anti-commutes with the other half.  This leads us to define (for $p_1,p_2\in \widetilde{\paul}_n$)
\begin{equation}
  F(p_1,p_2) := \begin{cases} 0 & p_1p_2 = p_2 p_1\\
    1 & p_1p_2 = -p_2 p_1.\end{cases}
\end{equation}

Note that this previous discussion assumes we are working with qubits.  The Pauli group can be generalized to $d$-dimensional systems, with many of the same properties.  In particular, if $\big\{\ket{j}\big\}_{j\in[d]}$ is an orthonormal basis for the space $\CC^{d}$, then the generalized Pauli group is a subset of $L\big(\CC^{d}\big)$ defined as
\begin{equation}
  \widetilde{\paul}_1^d := \big\langle \tilde{\omega} \II_{\CC^d}, \sigma_{X}^d, \sigma_{Z}^d\big\rangle,
      \quad \sigma_{X}^d \ket{j} := \ket{j+1\Mod d},
      \quad \sigma_{Z}^d \ket{j} := \omega^j \ket{j },
      \quad \omega := e^{2\pi i /d},
      \quad \tilde{\omega} := \begin{cases} \omega & d\text{ odd}\\ e^{\pi i /d} & d\text{ even}.\end{cases}
    \end{equation}
We define $\tilde{\omega}$ differently for even and odd dimension as $(\sigma_{X}^d\sigma_{Z}^d)^d$ equals either $-\II$ or $\II$ depending on whether $d$ is even, and thus the allowed global phases in the group differ depending on the parity of $d$.  With this, we can then define the generalized Pauli group on $n$ qudits as $\widetilde{\paul}_n^d := \big(\widetilde{\paul}^d_1\big)^{\otimes n}$, as well as the group without phases $\paul_n^d := \widetilde{\paul}_n^d / \langle \tilde{\omega} \II_{\CC^{d^n}}\rangle$ and the group without the identity element $\widehat{\paul}_n^d = \paul_n^d \setminus \{\II_{\CC^{d^n}}\}$.  Additionally, it will be useful to name the non-identity elements whose order divides $d$:  $\overline{\paul}_n^d := \big\{ \omega^\ell p : p\in \widehat{\paul}_n^d \text{ and } \ell\in [d]\big\}$.

\begin{table}
  \caption{Definitions for sets related to the Pauli group.}  
  \begin{tabular}{|r@{\;$:=$\;}ll|}
    \hline
    $\widetilde{\paul}_1^d$ &$\langle \tilde{\omega} \II_{\CC^d},\sigma_X^d,\sigma_Z^d\rangle$ & The Pauli group on one qudit\\
    $\widetilde{\paul}_n^d$ & $(\widetilde{\paul}_1^d)^{\otimes n}$ &The Pauli group on $n$ qudits\\
    $\paul_n^d$ &$ \widetilde{\paul}_n^d / \langle\tilde{\omega} \II_{\CC^{d^n}}\rangle $ & The Pauli group modulo phases\\
    $\widehat{\paul}_n^d$ & $\paul_n^d \setminus \{\II_{\CC^{d^n}}\} $& The Pauli group without phases or the identity element\\
    $\overline{\paul}_n^d$ &$\big\{\omega^\ell p : p\in \widehat{\paul}_n^d, l\in[d]\big\}$ &Non-identity Pauli elements with order that divides $d$\\\hline
  \end{tabular}
\end{table}

The generalized Pauli group forms a basis for $L\big(\CC^{d^n}\big)$, much like the Pauli group.  One difference between the two groups is that the commutation relations of the generalized Pauli group are not two-valued, in that $p_1p_2= \omega^j p_2p_1$ for some $j\in [d]$.  This forces us to extend the definition of $F(p_1,p_2)$ to be the $d$-valued function equal to the power of $\omega$ in the commutation relation. (Note that for $d$-dimensional systems it is no longer symmetric in its arguments).  However, much like how each non-identity Pauli element commutes and anti-commutes with exactly half of the Pauli group, each non-identity element of the generalized Pauli group has a commutation relation $\ell$ with a $1/d$ fraction of the generalized Pauli group. 

With these definitions and facts about the Pauli group, we can define the Clifford group as the normalizer of the Pauli group in $\mathcal{U}\big(\CC^{2^n}\big)$.  More concretely, we have
\begin{equation}
  \widetilde{\clif}_n := \big\{c \in \mathcal{U}\big(\CC^{2^n}\big) : \forall p\in \paul_n, cpc^\dag \in  \widetilde{\paul}_n\big\}.
\end{equation}
Noting that $c\in \widetilde{\clif}_n$ implies $e^{i\theta}c\in \widetilde{\clif}_n$ for all $\theta\in\RR$, we define $\clif_n := \widetilde{\clif}_n/ \big\{e^{i\theta}\II_{\CC^{2^n}} : \theta\in \RR\big\}$.

In analogy to the Clifford group, we can define the generalized Clifford group to be the normalizer of the generalized Pauli group.  Explicitly,  
\begin{equation}
  \widetilde{\clif}_n^d := \big\{ c \in \unitary\big(\CC^{d^n}\big) : \forall p \in \paul_{n}^d, cpc^\dag \in \widetilde{\paul}_n^d\big\}
   \qquad \text{ and }\qquad   
   \clif_n^d := \widetilde{\clif}_n^d/\big\{e^{i\theta} \II_{\CC^{d^n}} : \theta \in \RR\big\}.
\end{equation}

We will often be interested in the subsets of $\clif_n^d$ in which a particular element of $\paul_n^d$ gets mapped to another.  As such, for $p\in \widehat{\paul}_n^d$ and $q\in \overline{\paul}_n^d$ let us define the set
\begin{equation}
  \clif_{p\rightarrow q} := \{c\in\clif_n^d : cpc^{\dag} = q\}
\end{equation}
consisting of exactly those generalized Clifford elements that map $p$ to $q$ under conjugation.  We need only examine $q\in \overline{\paul}_n^d$, as the representative $p\in \widehat{\paul}_n^d$ satisfy $p^d =\II_{\CC^{d^n}}$.

 We can generalize this idea to Clifford elements that have a fixed action on sets of Pauli elements.  In particular let $\mathbf{p}\in\big(\widehat{\paul}_n^d\big)^{ m}$ and  $\mathbf{q}\in \big(\overline{\paul}_n^d\big)^{ m}$, and define
\begin{equation}
  \clif_{\mathbf{p}\rightarrow \mathbf{q}} = \{c\in \clif_n^d : \forall i\in [m], cp_ic^\dag =  q_i\}.
\end{equation}
Note that the set $\clif_{\mathbf{p}\rightarrow \mathbf{q}}$ will be empty if the commutation relations between the elements of $\mathbf{p}$ and $\mathbf{q}$ are not consistent.  

For a more in depth examination of the Pauli and Clifford groups, including an explanation of their symplectic representation, see \cite{CRSS96,Far14,Got99}.  


\subsection{Permutation operators}

When analyzing unitary $k$-designs on $\XX = \CC^d$, we will be attempting to understand operators from $L\big(\XX^{\otimes k}\big)$.  It turns out that the operators corresponding to a permutation of these underlying spaces are useful in the analysis of $k$-designs.

Let $\big\{\ket{i}\big\}_{i\in[d]}$ be an orthonormal basis for $\XX$.  For each element $\pi\in S_k$ (where $S_k$ is the symmetric group on $k$ elements), we can define the operator $W_\pi \in L\big(\XX^{\otimes k}\big)$ that permutes the subsystems $\XX_i$ according to $\pi$:
\begin{equation}
  W_\pi = \sum_{\forall j\in [k], i_j \in[d]} \bket{\pi(i_{1},i_{2},\cdots,i_{k})}\bbra{i_{1},i_{2},\cdots,i_{k}}.
\end{equation}

As we are interested in the case $k=3$, let us expand $W_{(123)}$ in the generalized Pauli basis as
\begin{equation}
  W_{(123)} = \sum_{p_1,p_2,p_3\in \paul^d} \alpha(p_1,p_2,p_3) p_1\otimes p_2\otimes p_3,
\end{equation}
where
\begin{align}
  \alpha(p_1,p_2,p_3) & = \frac{1}{d^3}\tr\big[W_{(123)} \big(p_1^\dag \otimes p_2^\dag \otimes p_3^\dag\big)\big]
  = \frac{1}{d^3}\tr\big[p_1^\dag p_3^\dag p_2^\dag \big]
  = \frac{1}{d^2}\begin{cases}
        \beta  & p_3 = \beta (p_1p_2)^\dag\\
         0 & \text{otherwise}.
        \end{cases}
\end{align}
If we then take $p_3 = (p_1p_2)^\dag$ (i.e, $\beta = 1$), this yields
\begin{equation}
  W_{(123)} = \frac{1}{d^2}\sum_{p_1,p_2\in\paul^d} p_1\otimes p_2\otimes p_2^\dag p_1^\dag.
\end{equation}
If we assume that $d = 2^n$ for some $n$ (i.e., we are working with $n$ qubits), then we can take $p^\dag = p$ and the equation simplifies slightly:
\begin{equation}
  W_{(123)} = \frac{1}{2^{2n}} \sum_{p_1,p_2\in\paul_n} p_1\otimes p_2\otimes p_2 p_1.
\end{equation}

In a similar manner, we can see that
\begin{align}
  W_{(12)}  &= \frac{1}{d} \sum_{p_1\in\paul^d} p_1 \otimes p_1^\dag \otimes \II_\XX, &W_{(13)} &=\frac{1}{d} \sum_{p_1\in\paul^d} p_1 \otimes \II_\XX \otimes p_1^\dag,\\
  W_{(23)} &= \frac{1}{d} \sum_{p_1\in\paul^d}\II_\XX \otimes  p_1 \otimes p_1^\dag,  & W_{(321)} &= \frac{1}{d^2} \sum_{p_1,p_2\in\paul^d} p_1 \otimes p_2 \otimes p_1^\dag p_2^\dag.
\end{align}

We will also be interested in the case $k=4$ for qubit systems, and in particular we will be interested in those $W_\pi$ that have nonzero support on operators in $\widehat{\paul}_n^{\otimes 4}$ (i.e, those $W_\pi$ that satisfy $\langle X, W_\pi\rangle \neq 0$ for some $X\in \widehat{\paul}_n^{\otimes 4}$).  The only operators with this property are the rank-4 permutations and the commuting two-cycle permutations, as they are the only permutations that have no fixed point.   It can be shown in a manner identical to that of $W_{(123)}$ that
\begin{align}
  W_{(1234)} &= \frac{1}{2^{3n}}\sum_{p_1,p_2,p_3\in \paul_n} p_1\otimes p_2\otimes p_3 \otimes p_3p_2p_1 ,
  &W_{(12)(34)} &= \frac{1}{2^{2n}} \sum_{p_1,p_2\in \paul_n} p_1\otimes p_1 \otimes p_2\otimes p_2,
\end{align}
with an appropriate rearrangement of the subscripts for the permutations that are conjugate to these.

There is a nice relation between operators on $L\big(\XX^{\otimes k}\big)$ that commute with $X^{\otimes k}$ for all $X\in L(\XX)$ and these permutation operators.  The following theorem will be used to characterize the $k$-fold Haar-random unitary twirl channel:  

\begin{theorem}[Theorem 7.16 of \cite{Watrous}]
  Let $k$ be a positive integer, let $\XX = \CC^d$, and let $X \in L\big(\XX^{\otimes k}\big)$ be an operator.  The following statements are equivalent:
  \begin{enumerate}
    \item It holds that $\big[X,Y^{\otimes k}\big] = 0$ for all choices of $Y\in L(\XX)$.
    \item It holds that $\big[X,U^{\otimes k}\big] = 0$ for all choices of $U\in \mathcal{U}(\XX)$.
    \item For some choice of vector $u \in \CC^{|S_k|}$, it holds that 
    \begin{equation}
      X = \sum_{\pi\in S_k} u_\pi W_\pi.
    \end{equation}    
  \end{enumerate}
\label{thm:watrous}
\end{theorem}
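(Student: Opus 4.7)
My plan is to prove the cycle $(3) \Rightarrow (1) \Rightarrow (2) \Rightarrow (3)$. The implication $(3) \Rightarrow (1)$ is a direct calculation: each $W_\pi$ permutes the $k$ tensor factors, so $W_\pi Y^{\otimes k} W_\pi^\dagger = Y^{\otimes k}$ because all factors equal $Y$, giving $[W_\pi, Y^{\otimes k}] = 0$ and hence $[X, Y^{\otimes k}] = 0$ by linearity for $X = \sum_\pi u_\pi W_\pi$. The implication $(1) \Rightarrow (2)$ is trivial since $\unitary(\XX) \subseteq L(\XX)$.

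The main work is $(2) \Rightarrow (3)$, and my approach combines the von Neumann double commutant theorem with Schur--Weyl duality. Let $\mathcal{A} := \mathrm{span}\{W_\pi : \pi \in S_k\}$ and $\mathcal{B} := \mathrm{span}\{U^{\otimes k} : U \in \unitary(\XX)\}$, viewed as subspaces of $L\big(\XX^{\otimes k}\big)$. Both are unital $*$-subalgebras: $\mathcal{A}$ because $W_\pi W_\sigma = W_{\pi\sigma}$ and $W_\pi^\dagger = W_{\pi^{-1}}$, and $\mathcal{B}$ because $U^{\otimes k} V^{\otimes k} = (UV)^{\otimes k}$ and $(U^{\otimes k})^\dagger = (U^\dagger)^{\otimes k}$. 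The already-proved $(3) \Rightarrow (2)$ gives $\mathcal{A} \subseteq \mathcal{B}'$, and the double commutant theorem in the finite-dimensional algebra $L\big(\XX^{\otimes k}\big)$ yields $\mathcal{A}'' = \mathcal{A}$ and $\mathcal{B}'' = \mathcal{B}$. Consequently, the target inclusion $\mathcal{B}' \subseteq \mathcal{A}$ is equivalent, by taking commutants, to the reverse-direction inclusion $\mathcal{A}' \subseteq \mathcal{B}$, which is the form in which I would prove it.

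The main obstacle is therefore showing that every operator that commutes with all $W_\pi$ is a linear combination of tensor powers $U^{\otimes k}$. My plan is to decompose $\XX^{\otimes k}$ into isotypic components under the $S_k$-action, writing $\XX^{\otimes k} \cong \bigoplus_\lambda V_\lambda \otimes M_\lambda$ with each $V_\lambda$ an irreducible $S_k$-representation appearing in the decomposition and $M_\lambda$ its multiplicity space. Schur's lemma immediately identifies $\mathcal{A}' = \bigoplus_\lambda \II_{V_\lambda} \otimes L(M_\lambda)$. The genuinely hard step, the content of classical Schur--Weyl duality, is that $\unitary(\XX)$ acts irreducibly and pairwise inequivalently on the multiplicity spaces $M_\lambda$ via $U \mapsto \II_{V_\lambda} \otimes \rho_\lambda(U)$; once this is in hand, Burnside's density theorem (applied componentwise using the inequivalence) forces $\mathcal{B}$ to already surject onto $\bigoplus_\lambda \II_{V_\lambda} \otimes L(M_\lambda) = \mathcal{A}'$. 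This irreducibility statement is where representation-theoretic input becomes unavoidable and is the technical heart of the theorem.
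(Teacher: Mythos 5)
First, a remark on the ground truth: the paper does not prove this statement at all---it is imported verbatim as Theorem 7.16 of the cited reference, and the surrounding text explicitly declines to unpack the machinery behind it. So your proposal should be measured against the standard proof in that reference. Your easy implications are fine, and your algebraic framing is correct: $\mathcal{A}=\mathrm{span}\{W_\pi\}$ and $\mathcal{B}=\mathrm{span}\{U^{\otimes k}\}$ are unital $*$-subalgebras of $L\big(\XX^{\otimes k}\big)$, the finite-dimensional double commutant theorem gives $\mathcal{A}''=\mathcal{A}$ and $\mathcal{B}''=\mathcal{B}$, and together with $\mathcal{A}\subseteq\mathcal{B}'$ this correctly reduces $(2)\Rightarrow(3)$ to the inclusion $\mathcal{A}'\subseteq\mathcal{B}$. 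The identification $\mathcal{A}'=\bigoplus_\lambda \II_{V_\lambda}\otimes L(M_\lambda)$ via Schur's lemma is also correct, as is your observation that the componentwise Burnside argument needs pairwise inequivalence, not just irreducibility.

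The gap is that the entire technical content is then deferred to the assertion that $\unitary(\XX)$ acts irreducibly and pairwise inequivalently on the multiplicity spaces $M_\lambda$, cited as ``classical Schur--Weyl duality'' without proof. That assertion is not an auxiliary lemma: given the double commutant theorem it is logically equivalent to the mutual-commutant statement $\mathcal{B}=\mathcal{A}'$ you are trying to prove, and in most textbook treatments it is \emph{derived from} that statement rather than the reverse. As written, your argument is therefore either circular or silently imports a highest-weight proof of irreducibility of the Weyl modules, which is much heavier than the theorem requires. Your closing claim that representation-theoretic input is ``unavoidable'' is also not right. Under the isomorphism $L\big(\XX^{\otimes k}\big)\cong L(\XX)^{\otimes k}$, conjugation by $W_\pi$ permutes the $k$ tensor factors, so $\mathcal{A}'$ is exactly the symmetric subspace of $L(\XX)^{\otimes k}$, and the polarization identity
\begin{equation}
  \sum_{\pi\in S_k} Y_{\pi(1)}\otimes\cdots\otimes Y_{\pi(k)} \;=\; \frac{1}{2^k}\sum_{\epsilon\in\{\pm1\}^k}\epsilon_1\cdots\epsilon_k\,\big(\epsilon_1 Y_1+\cdots+\epsilon_k Y_k\big)^{\otimes k}
\end{equation}
shows that this subspace equals $\mathrm{span}\{Y^{\otimes k}:Y\in L(\XX)\}$. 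One then checks $\mathrm{span}\{Y^{\otimes k}\}=\mathrm{span}\{U^{\otimes k}\}$ (the set of $Y$ with $Y^{\otimes k}\in\mathcal{B}$ is cut out by polynomial equations and contains the unitary group, which is Zariski-dense in $L(\XX)$; this same point also gives $(2)\Rightarrow(1)$) and applies the double commutant theorem once. That elementary route---essentially the one in the cited reference---uses no isotypic decomposition, no Burnside, and no irreducibility. To repair your proposal, either substitute this polarization argument for the Schur--Weyl black box, or cite the mutual-commutant form of Schur--Weyl directly instead of rederiving it from its own corollary.
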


The proof of this theorem applies Von Neumann's double commutant theorem, and as such hides some heavy mathematical machinery behind the theorem statement.  We will not need this theorem to show that the Clifford group is a 3-design, but it will be useful to us in order to show that certain ensembles do not form $k$-designs.


\subsection{The Haar measure on $\mathcal{U}(\XX)$.}

In an attempt to generalize the idea that the size of an interval on $\RR$ does not depend on its location, the Haar measure on a group is defined so that the measure is invariant under the action of the group.  In the cases we are interested in, this means that an arbitrary rotation of a set of unitaries does not change the set's size. 

Haar-random unitaries have found several uses in quantum information.  The symmetries of the measure ease the analysis of many systems, and the averaging properties allow for the construction of an ``average'' state.  Several theoretical results in quantum information have simple proofs using Haar-random unitaries, such as \cite{HLSW04,HHWY08,HW08,HLW06}.

Let us now define the Haar measure.  For $\XX = \CC^d$, we are interested in a measure $\eta$ on $\mathcal{U}(\XX)$ that satisfies
\begin{equation}
  \int_{\mathcal{U}(\XX)} d\eta(U) = 1,
\end{equation}
as well as satisfying the property that for any element $U\in \mathcal{U}(\XX)$ and any (measurable) subset $S\subset \mathcal{U}(\XX)$, 
\begin{equation}
  \int_{S} d\eta(V) = \int_{US} d \eta (V) = \int_{S} d\eta(U V) = \int_S d\eta(VU).
\end{equation}
These two requirements uniquely determine the probability measure $\eta$, which is called the Haar measure. 

For our purposes we need only note that  when integrating according to the Haar measure over $\unitary(\XX)$, multiplying the variable of integration by a unitary does not change the value of the integrand.  This ``unitary invariance of the Haar measure'' will allow us to prove some useful equalities. 

For more information on the Haar measure, including a formal derivation, see Chapter 7 of \cite{Watrous}.


\subsection{$k$-designs}

The motivation behind unitary designs is to approximate the distribution of Haar-random unitaries without requiring an infeasible amount of resources.  Namely, a design should be a finite (and hopefully small) ensemble such that a unitary sampled from the design is hard to distinguish from a unitary sampled from the Haar measure.  Keeping with this idea, one definition of $k$-designs requires that the first $k$-moments of the two distributions are equal, so that at least $(k+1)$ applications of the sampled unitary are needed in order to distinguish between the distributions.  While several equivalent definitions exist for $k$-designs (see \cite{Low10} for a proof of equivalence), we shall focus on one defined in terms of the $k$-fold unitary twirl.  

A (finite) ensemble of unitaries over $\XX$ is denoted by $\EE = \big\{(\alpha_i,U_i)\big\}_{i\in[m]}$, where each $U_i$ is a unitary over $\XX$, $\alpha \in \RR^{m}$ is a probability vector, and we think of $\alpha_i$ as the probability that $U_i$ is chosen from the ensemble.  If we refer to a finite set $\mathcal{S}\subset \mathcal{U}(\XX)$ as an ensemble, then we are referring to the ensemble where each element of $\mathcal{S}$ is chosen with equal probability. 

For any complex Euclidean space $\mathcal{X}$ and any positive integer $k$, let us define the channel $T_k\in C\big(\mathcal{X}^{\otimes k}\big)$ to be the $k$-fold twirl by Haar-random unitaries with an action on $X\in L\big(\XX^{\otimes k}\big)$ given by
\begin{equation}
  T_k(X) := \int d\eta(U) U^{\otimes k} X \big(U^\dag\big)^{\otimes k}.
\end{equation}
Similarly, we can define the $k$-fold twirl by any (finite) ensemble of unitaries $\EE$ over $\XX$ to be the channel $\Psi_{\EE,k} \in C\big(\mathcal{X}^{\otimes k}\big)$ with an action on $X\in L\big(\XX^{\otimes k}\big)$ given by
\begin{equation}
  \Psi_{\mathcal{E},k}(X) := \sum_{(\alpha,U)\in\EE} \alpha \;U^{\otimes k} X \big(U^\dag\big)^{\otimes k}.
\end{equation}

\begin{definition}[$k$-design]  Let $\mathcal{X}$ be a complex Euclidean space and $k$ be a positive integer.  A finite ensemble $\mathcal{E}$ of unitaries over $\XX$ is a unitary $k$-design if and only if $\Psi_{\EE,k}(X) = T_k(X)$, for all $X\in L\big(\mathcal{X}^{\otimes k}\big)$.
\end{definition}

(Note that this definition of $k$-designs is sometimes called a weighted $k$-design.  An unweighted $k$-design is a design where each probability $\alpha$ is equal.)

Several families of exact 2-designs are known, including the Clifford group and some of its subgroups.  Additionally, several $k$-designs on small systems have been found by analyzing the characters of finite groups \cite{GAE07,RS09}.  However, no construction of $3$-designs on $n$ qubits was previously known.  (Approximate $k$-designs are known for arbitrary $k$: simply take a random circuit of length $\poly(k)$ \cite{BHH12}.)

Much is known about the properties of a $k$-design, even if no construction currently exists.  In particular, it is known that an exact $k$-design is also a $(k-1)$-design, there are bounds on the minimal size of a $k$-design for a $d$-dimensional system, and there exists a characterization of $k$-designs in terms of a simple function of the design elements \cite{GAE07,RS09,Low10}.  

An additional property that will be of use to us is the following relation between any $k$-fold unitary twirl and $T_k(X)$, with the intuition that the channel $T_k$ acts as a projector and the $k$-fold twirl by an ensemble is only an approximation to $T_k$.
\begin{prop}  Let $\mathcal{X}$ be a complex Euclidean space, $\EE$ be a finite ensemble of unitaries over $\XX$, and $k$ be a positive integer.  For all $X\in L(\mathcal{X}^{\otimes k})$,  $T_k$ satisfies $T_k(X) = T_k\big(\Psi_{\EE,k}(X)\big)$ .
\label{prop:unitary_twirl_is_projective}
\end{prop}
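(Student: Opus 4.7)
The plan is to combine linearity of $T_k$ with the invariance of the Haar measure on $\unitary(\XX)$ under right multiplication. At a high level, the proposition says that the Haar-random twirl is ``absorptive'': any finite unitary twirl applied before $T_k$ is swallowed, because every summand gets reabsorbed by a change of variables in the integral defining $T_k$.

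First I would expand $\Psi_{\EE,k}(X) = \sum_{(\alpha,U)\in\EE}\alpha\, U^{\otimes k} X (U^\dag)^{\otimes k}$ and use the linearity of $T_k$ (inherited from linearity of the Haar integral) to push $T_k$ inside the sum. This reduces the statement to the single-unitary claim $T_k\!\big(U^{\otimes k} X (U^\dag)^{\otimes k}\big) = T_k(X)$ for each fixed $U \in \unitary(\XX)$. Writing out the integral and combining tensor factors gives
\begin{equation*}
T_k\!\big(U^{\otimes k} X (U^\dag)^{\otimes k}\big) = \int d\eta(V)\, (VU)^{\otimes k}\, X\, \big((VU)^\dag\big)^{\otimes k}.
\end{equation*}
Applying the substitution $W = VU$ and invoking the right-invariance of $\eta$ (the identity $\int_S d\eta(V) = \int_S d\eta(VU)$ recorded in the definition of the Haar measure) rewrites this as $\int d\eta(W) W^{\otimes k} X (W^\dag)^{\otimes k} = T_k(X)$.

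Finally I would assemble the pieces: since each term $T_k\!\big(U^{\otimes k} X (U^\dag)^{\otimes k}\big)$ equals $T_k(X)$ independent of $U$, one obtains $T_k\!\big(\Psi_{\EE,k}(X)\big) = \big(\sum_{(\alpha,U)\in\EE}\alpha\big)\, T_k(X) = T_k(X)$, using that $\alpha$ is a probability vector. Conceptually there is no serious obstacle here; the only care needed is bookkeeping, namely applying the Haar-invariance with the correct orientation (right-multiplication, since the ensemble unitary sits on the inside of the outer twirl) and noting that linearity in $X$ is what lets us commute $T_k$ past the finite sum defining $\Psi_{\EE,k}$.
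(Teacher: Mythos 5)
Your proof is correct and uses the same key ingredients as the paper's: linearity of $T_k$, the right-invariance of the Haar measure under the substitution $V \mapsto VU$, and the fact that the weights $\alpha$ sum to one. The only difference is cosmetic --- you start from $T_k\big(\Psi_{\EE,k}(X)\big)$ and absorb each ensemble unitary into the integral, whereas the paper starts from $T_k(X)$ and introduces the ensemble sum via $\sum\alpha = 1$ before regrouping; the computation is the same read in the opposite direction.
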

\begin{proof}
  This makes use of the unitary invariance of the Haar measure.  Namely
  \begin{align}
    T_k(X) &= \int d\eta(U) U^{\otimes k} X \big(U^\dag\big)^{\otimes k} 
    = \sum_{(\alpha, V)\in \EE} \alpha \int d\eta(UV) (UV)^{\otimes k} X \big((UV)^\dag\big)^{\otimes k} \\
    &= \int d\eta(U) U^{\otimes k} \Bigg[\sum_{(\alpha,V)\in \EE} \alpha V^{\otimes k} X \big(V^\dag\big)^{\otimes k}\Bigg] \big(U^\dag\big)^{\otimes k} 
    = T_k\big(\Phi_{\EE,k}(X)\big).\qedhere
  \end{align}
\end{proof}

With this relation between the $k$-fold twirl by Haar-random unitaries and an arbitrary ensemble, it will be useful to know when $T_k(X) = X$.  This is where the permutation operators become useful:
\begin{prop}
  Let $\XX$ be a complex Euclidean space, $k$ be a positive integer, and $\pi \in S_k$.  The permutation operator $W_\pi \in L\big(\XX^{\otimes k}\big)$ satisfies $T_k(W_\pi) = W_\pi .$
\label{prop:unit_twirl_preserves_permutation}
\end{prop}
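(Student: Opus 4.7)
The plan is to show that each permutation operator $W_\pi$ is left invariant by conjugation by $U^{\otimes k}$ for every unitary $U$, and then use the normalization of the Haar measure to conclude.

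First I would establish the commutation $W_\pi U^{\otimes k} = U^{\otimes k} W_\pi$ for every $U \in \unitary(\XX)$. This can be done directly on the computational basis: the operator $U^{\otimes k}$ applies the same $U$ to each tensor factor, so permuting the factors before or after applying $U^{\otimes k}$ yields the same result. Alternatively, one can simply invoke the implication $(3) \Rightarrow (2)$ of \thm{watrous}, taking $u$ to be the indicator vector of $\pi$, which gives exactly that $W_\pi$ commutes with every $U^{\otimes k}$.

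Given this commutation, $U^{\otimes k} W_\pi \bigl(U^\dag\bigr)^{\otimes k} = W_\pi U^{\otimes k} \bigl(U^\dag\bigr)^{\otimes k} = W_\pi$ for every $U \in \unitary(\XX)$. Substituting into the definition of $T_k$ and using that the Haar measure is normalized to one yields
\begin{equation}
T_k(W_\pi) = \int d\eta(U)\, U^{\otimes k} W_\pi \bigl(U^\dag\bigr)^{\otimes k} = \int d\eta(U)\, W_\pi = W_\pi,
\end{equation}
which is the desired equality.

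There is no real obstacle here: the only nontrivial content is the commutation of $W_\pi$ with $U^{\otimes k}$, which is essentially the defining property of the permutation representation of $S_k$ on $\XX^{\otimes k}$ and is a special case of \thm{watrous}. Everything else is bookkeeping with the Haar normalization.
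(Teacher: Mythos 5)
Your proof is correct and takes essentially the same approach as the paper: both establish $[W_\pi, U^{\otimes k}] = 0$ directly from the fact that every tensor factor of $U^{\otimes k}$ carries the same operator, and then use the normalization of the Haar measure to evaluate the integral. The only caveat is that your suggested alternative via \thm{watrous} would be overkill here---the paper deliberately keeps this proposition free of the double-commutant machinery so that the 3-design proof stays elementary---but your primary, direct route is exactly what the paper does.
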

\begin{proof}
  First note that for any $V\in \unitary(\XX)$ and any $\pi\in S_k$, $\big[W_\pi,V^{\otimes k}\big] = 0$ from the simple fact that each register of $V^{\otimes k}$ contains the same operator.  Using this, we can see
  \begin{equation}
    T_k(W_\pi) = \int d\eta(U) U^{\otimes{k}} W_\pi \big(U^\dag\big)^{\otimes k} = W_\pi \int d\eta(U) \big(UU^\dag\big)^{\otimes k} = W_\pi. \qedhere
  \end{equation}
\end{proof}

These two propositions give us enough insight to understand the action of $T_k$ and give a sufficient condition for an ensemble to form a $k$-design.  In particular, if we can show that the image of a $k$-fold twirl over some ensemble $\EE$ is contained within the span of the $W_\pi$, then $\EE$ forms a $k$-design.  Note that this proof strategy does not require us to explicitly know the action of $T_k$ on an arbitrary operator $X$.

We will, however, require a characterization of the $k$-fold Haar-random twirl channel in our proofs that certain ensembles do not form $k$-designs.  Specifically, we need to show that the image of any $X$ under $T_k$ is contained within the span of the $W_\pi$. If we then show that the image of some channel is not contained within the span of the $W_\pi$, this will allow us to claim that the channel is not equal to $T_k$.

\begin{prop} Let $\XX$ be a complex Euclidean space and $k$ be a positive integer.  For any $X\in L\big(\XX^{\otimes k}\big)$,
\begin{equation}
  T_k(X) = \sum_{\pi \in S_k} \alpha_\pi(X) W_\pi,
\end{equation}
where $W_\pi \in L(\XX^{\otimes k})$ is the permutation operator corresponding to $\pi$ and $\alpha(X)\in \CC^{|S_k|}$.   
\label{prop:unit_twirl_spanned_by_permutation}
\end{prop}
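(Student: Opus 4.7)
The plan is to combine the unitary invariance of the Haar measure with Theorem~\ref{thm:watrous}, which already characterizes the operators on $\XX^{\otimes k}$ that commute with every $U^{\otimes k}$ as precisely the span of the permutation operators $W_\pi$. So it suffices to show that $T_k(X)$ lies in the commutant of $\{U^{\otimes k} : U \in \unitary(\XX)\}$; the decomposition in the statement then follows immediately from the equivalence (2)$\Leftrightarrow$(3) in Theorem~\ref{thm:watrous}.

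To establish the commutation relation, I would fix an arbitrary $V \in \unitary(\XX)$ and compute
\begin{equation*}
  V^{\otimes k} T_k(X) \bigl(V^\dag\bigr)^{\otimes k} = \int d\eta(U)\, (VU)^{\otimes k} X \bigl((VU)^\dag\bigr)^{\otimes k}.
\end{equation*}
By the left-invariance of the Haar measure (replacing the dummy variable $U$ by $V^\dag U$, or equivalently noting that the measure is unchanged under $U \mapsto VU$), the right-hand side is exactly $T_k(X)$. Hence $V^{\otimes k} T_k(X) = T_k(X) V^{\otimes k}$ for every unitary $V$, i.e., $[T_k(X), V^{\otimes k}] = 0$.

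Applying Theorem~\ref{thm:watrous} to $T_k(X)$ then yields a vector $\alpha(X) \in \CC^{|S_k|}$ with $T_k(X) = \sum_{\pi \in S_k} \alpha_\pi(X) W_\pi$, as desired. There is no real obstacle here beyond a careful change of variables in the Haar integral; all of the heavy lifting (the use of Von Neumann's double commutant theorem) is already packaged inside Theorem~\ref{thm:watrous}. The only small subtlety worth flagging is that the coefficients $\alpha_\pi(X)$ depend linearly on $X$ but we do not need an explicit formula for them here, since subsequent arguments will only use that $T_k(X)$ lies in the span of the $W_\pi$.
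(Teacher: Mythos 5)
Your proposal is correct and follows essentially the same route as the paper: conjugate $T_k(X)$ by $V^{\otimes k}$, use the left-invariance of the Haar measure to show $[T_k(X),V^{\otimes k}]=0$, and then invoke Theorem~\ref{thm:watrous}. No gaps.
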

\begin{proof}
  Let $V\in \mathcal{U}(\XX)$.  Note that
  \begin{equation}
    V^{\otimes k} T_k(X)\big(V^\dag\big)^{\otimes k} = \int d\eta(U) (VU)^{\otimes k} X\big(U^\dag V^\dag\big)^{\otimes k} = \int d\eta(U) U^{\otimes k} X \big(U^\dag\big)^{\otimes k} = T_k(X),
  \end{equation}
  where we have used the unitary invariance of the Haar measure.  This gives us that $\big[T_k(X), V^{\otimes k}\big] = 0$, and the result follows from an application of \thm{watrous}.
\end{proof}
Though our proof does not make explicit use of representation theory or any deep mathematical theorems, an application of Von Neumann's double commutant theorem is hidden in the proof of \thm{watrous}.  As such, any result using the characterization of \propo{unit_twirl_spanned_by_permutation} also uses some heavy mathematical machinery.


\subsection{Pauli Mixing}

A key idea in our proof that the Clifford group is a unitary 3-design is the understanding that a random element of the Clifford group sends any pair of Pauli elements to a uniformly random pair, thus breaking any correlation between Pauli elements.  This idea of uniform mixing over the Pauli group is enough to prove that an ensemble forms a 3-design.  To make these ideas more concrete, we require some additional definitions.

We first want to work with ensembles that remain unchanged under transformations by a Pauli element:
\begin{definition}[Pauli-invariant]  If $\XX = \CC^{d^n}$, then an ensemble $\EE$ is (right) \emph{Pauli-invariant} if for every $p\in \paul_n^d$, 
\begin{equation}
  (\alpha, U)\in \EE \quad \Rightarrow \quad (\alpha, e^{i \theta_U}U p)\in \EE.
\end{equation}
for some $\theta_U \in \RR$.
\end{definition}

   Additionally, we will often be interested in ensembles of Clifford elements, so that we can define the sub-ensemble of elements that have a particular action on a given Pauli element.  Namely, if $\EE$ is an ensemble of unitaries over $\CC^{d^n}$ consisting only of generalized Clifford elements (a $\clif_n^d$-ensemble), and if $p\in \widehat{\paul}_n^d$ and $q\in \overline{\paul}_n^d$, then $\EE_{p\rightarrow q}$ is the sub-ensemble (ensemble of unitaries with a probability vector that sums to less than 1) that only contains the generalized Clifford elements that take $p$ to $q$ under conjugation:
 \begin{equation}
   \EE_{p\rightarrow q} := \big\{(\alpha,U) \in \EE : U \in \clif_{p\rightarrow q}\big\}.
 \end{equation}
 We can extend this to sets of generalized Pauli elements: if  $\mathbf{p}\in\big( \widehat{\paul}_n^d\big)^{m}$ and $\mathbf{q}\in \big(\overline{\paul}_n^d\big)^{m}$ we define
 \begin{equation}
   \EE_{\mathbf{p}\rightarrow \mathbf{q}} := \big\{(\alpha,U)\in\EE : U\in \clif_{\mathbf{p}\rightarrow \mathbf{q}}\big\}.
 \end{equation} 
 Note that $\EE_{\mathbf{p}\rightarrow \mathbf{q}}$ will be empty if no element of $\EE$ satisfies the required constraints.

At this point we can define the notion of Pauli mixing in our framework.  Pauli mixing captures the intuitive idea that a distribution of Clifford elements uniformly permutes a given Pauli element.

\begin{definition}[Pauli Mixing]
Let $\EE$ be a Pauli-invariant $\clif_n^d$-ensemble.  We  say that $\EE$ is \emph{Pauli Mixing} if for all $p\in \widehat{\paul}_n^d$ the action of $\EE$ on $p$ is to map it to a uniform distribution over $\overline{\paul}_n^d$.  Explicitly, $\EE$ is Pauli mixing if for all $p\in \widehat{\paul}_n^d$ and $q\in \overline{\paul}_n^d$,
\begin{equation}
  \sum_{(\alpha,U)\in \EE_{p\rightarrow q}} \alpha= \frac{1}{\big| \overline{\paul}_n^d\big|} = \frac{1}{d \big|\widehat{\paul}_n^d\big| } = \frac{1}{d(d^{2n}-1)}.
\end{equation}
\label{def:pauli_mixing}
\end{definition}

The idea of Pauli mixing was used implicitly by \cite{DLT02}, and was initially defined by Gross, Audenaert, and Eisert in \cite{GAE07}, who showed that Pauli mixing is sufficient to form a 2-design.  Cleve, Leung, Liu, and Wang \cite{CLLW15} then generalized this result to ensembles.

\begin{restatable}{lemma}{pauliMixingTwoDesign} If $\EE$ is a Pauli-invariant and Pauli mixing $\clif_n^d$-ensemble, then $\EE$ is a 2-design.
\label{lem:pauli_mixing_2_design}
\end{restatable}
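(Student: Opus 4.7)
The plan is to apply \propo{unitary_twirl_is_projective} and \propo{unit_twirl_preserves_permutation} in tandem: if we can show that the image of $\Psi_{\EE,2}$ is contained in $\mathrm{span}\{W_\pi : \pi \in S_2\} = \mathrm{span}\{\II^{\otimes 2}, W_{(12)}\}$, then $T_2$ fixes $\Psi_{\EE,2}(X)$ pointwise by \propo{unit_twirl_preserves_permutation}, and combining this with $T_2(X) = T_2(\Psi_{\EE,2}(X))$ from \propo{unitary_twirl_is_projective} yields $\Psi_{\EE,2}(X) = T_2(X)$ for every $X$. By linearity it suffices to verify the containment on the Pauli basis $X = p_1 \otimes p_2$ with $p_1,p_2 \in \paul_n^d$.

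First I would use Pauli invariance alone to eliminate all but the ``diagonal'' Pauli pairs. Since global phases cancel under conjugation, Pauli invariance provides a measure-preserving substitution $U \mapsto Uq$ for each $q \in \paul_n^d$; averaging over $q$ and using $qp_iq^\dag = \omega^{F(q,p_i)}p_i$ together with the additivity $F(q,p_1)+F(q,p_2) \equiv F(q,p_1 p_2)\pmod{d}$ yields
\begin{equation*}
\Psi_{\EE,2}(p_1 \otimes p_2) = \bigg(\frac{1}{|\paul_n^d|}\sum_{q\in\paul_n^d}\omega^{F(q,p_1p_2)}\bigg)\Psi_{\EE,2}(p_1 \otimes p_2).
\end{equation*}
The parenthesized scalar equals $1$ when $p_1p_2 = \II$ in $\paul_n^d$ and equals $0$ otherwise, by the uniform-commutation property (each value of $F(\cdot,r)$ is attained on a $1/d$ fraction of $\paul_n^d$ whenever $r \neq \II$). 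Hence $\Psi_{\EE,2}(p_1 \otimes p_2) = 0$ unless $p_2 = p_1^{-1}$, and the case $p_1 = p_2 = \II$ simply returns $\II^{\otimes 2}$, already in the span.

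The remaining case $p_1 \in \widehat{\paul}_n^d$, $p_2 = p_1^{-1}$, is where Pauli mixing enters. Conjugation commutes with inversion, so the averaged operator is $(Up_1U^\dag) \otimes (Up_1U^\dag)^{-1}$, and Pauli mixing tells us that $Up_1U^\dag$ is uniformly distributed over $\overline{\paul}_n^d$. Since the phase factors $\omega^\ell$ in $\overline{\paul}_n^d = \{\omega^\ell p : p \in \widehat{\paul}_n^d,\ \ell \in [d]\}$ cancel in each product $q \otimes q^{-1}$, this gives
\begin{equation*}
\Psi_{\EE,2}(p_1 \otimes p_1^{-1}) = \frac{1}{|\widehat{\paul}_n^d|}\sum_{p \in \widehat{\paul}_n^d} p \otimes p^{-1}.
\end{equation*}
Choosing representatives so that $p^\dag = p^{-1}$ and invoking the identity $W_{(12)} = d^{-n}\sum_{p \in \paul_n^d} p \otimes p^\dag$ from \sec{math_prelim}, the right-hand side is a linear combination of $W_{(12)}$ and $\II^{\otimes 2}$, completing the verification.

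The main obstacle is qudit phase bookkeeping: one must check that the $\omega^\ell$ factors in $\overline{\paul}_n^d$ actually cancel in $q \otimes q^{-1}$ (they do, since $\omega^\ell \omega^{-\ell} = 1$), that the additivity $F(q,p_1)+F(q,p_2) \equiv F(q,p_1p_2)\pmod{d}$ holds in the paper's sign convention, and that representatives of $\paul_n^d$ with $p^\dag = p^{-1}$ can be chosen consistently with the expansion of $W_{(12)}$. In the qubit case ($d=2$) these issues trivialize since $p^2 = \II$ for every $p \in \paul_n$. Modulo this bookkeeping, the proof cleanly separates into Pauli invariance handling the off-diagonal Pauli pairs and Pauli mixing symmetrizing the diagonal ones.
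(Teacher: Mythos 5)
Your proposal is correct and follows essentially the same route as the paper's proof in the appendix: reduce to showing the image of $\Psi_{\EE,2}$ on Pauli basis elements lies in $\mathrm{span}\{\II_{\XX^{\otimes 2}}, W_{(12)}\}$, kill the $p_1p_2\not\propto\II$ terms via Pauli invariance, and use Pauli mixing on $p\otimes p^\dag$ together with the expansion of $W_{(12)}$. The only (immaterial) difference is that you average over the full Pauli group and invoke the character-sum cancellation of $\omega^{F(q,p_1p_2)}$, whereas the paper averages over the powers $r^\ell$ of a single element with $F(r,p_1p_2)=1$.
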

As our framework is slightly different from those of previous papers, we include a proof in \app{pauli_mixing_2_design}.

While Pauli mixing is sufficient to show that an ensemble is a 2-design, it is not a sufficient condition to form a 3-design.  We must then generalize this property to pairs of Pauli elements.  While one might at first imagine that we would want a mixing procedure to uniformly mix over all pairs of Pauli operators, this does not take into account the commutation relations of the Pauli group.  It turns out that the mixing condition will differ depending on whether the initial two operators commute.  Hence, let us define
\begin{align}
  H_{\text{commute}} &= H_0 := \big\{ (q_1,q_2) \in \big(\overline{\paul}_n\big)^2 :q_1\not\propto q_2 \text{ and } F(q_1,q_2) = 0\big\}\\
  H_{\text{anti-commute} }&= H_1 := \big\{ (q_1,q_2)\in \big(\overline{\paul}_n\big)^2: q_1\not\propto q_2 \text{ and }F(q_1,q_2) = 1\big\}.
\end{align}
If we are considering qudits, we can define similar subsets:
\begin{equation}
  H_{\ell} := \big\{ (q_1,q_2)\in \big(\overline{\paul}_n^d\big)^2 : q_1\not\propto q_2 \text{ and } F(q_1,q_2) = \ell\big\}.
\end{equation}

Note that to determine the sizes of these groups, we are free to choose the first element so long as it differs from the identity, and thus there are $d |\widehat{\paul}_{n}^d|$ such elements (after allowing for the $d$ different phases).  For the second element, there are $|\paul_n^d|/d - 2$ elements that commute with the first not including phases, and thus the total size of $H_{0}$ is
\begin{align}
  |H_{0}| &= d^2 |\widehat{\paul}_n^d | \bigg(\frac{|\paul_n^d|}{d}- 2\bigg) = d |\widehat{\paul}_n^d| \big(| \widehat{\paul}_n^d| +1 - 2d\big).
\end{align}
If we instead want an element with a given commutation relation, note that there are exactly $|\paul_n^d|/d$ such elements not including phases, and thus we have that
\begin{align}
  |H_{\ell}| &= d^2 |\widehat{\paul}_n^d| \bigg(\frac{\paul_n^d}{d} \bigg) = d |\widehat{\paul}_n^d| \big(|\widehat{\paul}_n^d| +1 \big).
\end{align}

The intuitive idea behind Pauli 2-mixing is that the ensemble $\EE$ sends a given pair of Pauli elements to a uniform mixture over all pairs that have the same commutation relation as the original pair.

\begin{definition}[Pauli 2-mixing]
  Let $\EE$ be a $\clif_n^d$-ensemble.  We say that $\EE$ is \emph{Pauli 2-mixing} if for all $p_1,p_2\in \widehat{\paul}_n^d$ and $p_1\not \neq p_2$, the action of $\EE$ is to map $(p_1,p_2)$ to a uniform distribution over $H_{F(p_1,p_2)}$.  Explicitly, $\EE$ is Pauli 2-mixing if for all $\mathbf{p}=(p_1,p_2)\in \big(\widehat{\paul}_n^d\big)^2$ with $p_1\neq p_2$, and all $\mathbf{q}\in H_{F(p_1,p_2)}$,
 \begin{equation}
   \sum_{(\alpha,U)\in \EE_{\mathbf{p}\rightarrow \mathbf{q}}} \alpha = \frac{1}{| H_{F(p_1,p_2)}|} 
     = \begin{cases} 
       \frac{1}{d \big|\widehat{\paul}_n^d\big| \big( 
       \big|\widehat{\paul}_n^d\big| + 1 -2d\big)}
       =  \frac{1}{d (d^{2n}-1)(d^{2n} - 2d)}  & F(p_1,p_2) = 0\\
       \frac{1}{d\big|\widehat{\paul}_n^d\big| \big(\big|\widehat{\paul}_n^d\big|+1\big)} 
       =\frac{1}{d^{2n + 1}(d^{2n}-1)} &  F(p_1,p_2) \neq 0 .\end{cases} 
 \end{equation}
\end{definition}

Note that if an ensemble is Pauli 2-mixing, then it is also Pauli mixing:
\begin{lemma}
  If $\EE$ is a Pauli 2-mixing $\clif_n^d$-ensemble, then $\EE$ is Pauli mixing.
\label{lem:2_mixing_implies_mixing}
\end{lemma}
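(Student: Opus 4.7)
The plan is to derive Pauli mixing from Pauli 2-mixing by marginalizing over the second coordinate of a pair. Intuitively, Pauli 2-mixing specifies the joint distribution of $(UpU^\dagger, Up'U^\dagger)$ for $U \sim \EE$ as uniform on $H_{F(p,p')}$, and the first marginal of such a uniform distribution is automatically uniform on $\overline{\paul}_n^d$, which is precisely the Pauli mixing condition.

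To make this rigorous I would fix $p \in \widehat{\paul}_n^d$ and $q \in \overline{\paul}_n^d$, pick any auxiliary $p' \in \widehat{\paul}_n^d$ with $p' \ne p$ (possible because $|\widehat{\paul}_n^d| = d^{2n} - 1 \ge 3$ whenever the statement is nontrivial), and set $\ell := F(p, p')$. Partitioning $\EE_{p \to q}$ according to the image of $p'$ gives
\begin{equation*}
  \sum_{(\alpha, U) \in \EE_{p \to q}} \alpha \;=\; \sum_{q' \in \overline{\paul}_n^d}\;\sum_{(\alpha, U) \in \EE_{(p, p') \to (q, q')}} \alpha.
\end{equation*}
By Pauli 2-mixing, and using that Clifford conjugation preserves commutation relations and sends non-proportional Paulis to non-proportional Paulis (so any pair outside $H_\ell$ is unreachable), each inner sum equals $1/|H_\ell|$ when $(q, q') \in H_\ell$ and $0$ otherwise. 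The outer sum therefore reduces to $N_\ell(q)/|H_\ell|$, where $N_\ell(q) := |\{q' \in \overline{\paul}_n^d : (q, q') \in H_\ell\}|$.

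The remaining step is to observe that $N_\ell(q)$ does not depend on $q$: the Clifford group acts transitively on $\overline{\paul}_n^d$ up to phase and preserves $H_\ell$ under the diagonal action, so all fibers of the first projection of $H_\ell$ have the same size, namely $|H_\ell|/|\overline{\paul}_n^d|$ by double counting. Substituting yields
\begin{equation*}
  \sum_{(\alpha,U) \in \EE_{p \to q}} \alpha \;=\; \frac{1}{|\overline{\paul}_n^d|} \;=\; \frac{1}{d(d^{2n}-1)},
\end{equation*}
which is exactly the Pauli mixing equation. I anticipate no real obstacle: the whole argument is just marginalization combined with the standard homogeneity of the Pauli group under Clifford conjugation, a fact already implicit in the author's computation of $|H_\ell|$ in the preamble.
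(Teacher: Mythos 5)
Your proposal is correct and follows essentially the same route as the paper: decompose $\EE_{p\to q}$ according to the image of an auxiliary second Pauli $p'$, apply the Pauli 2-mixing value $1/|H_\ell|$ to each piece, and count the fiber size $|H_\ell|/|\overline{\paul}_n^d|$. The only cosmetic difference is that the paper fixes $p_2$ with $F(p_1,p_2)=1$ (the one commutation class guaranteed to be nonempty for all $n$ and $d$) rather than an arbitrary $p'\neq p$, but your marginalization works for either choice.
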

\begin{proof}
  Given $p_1\in \widehat{\paul}_n^d$, we want to analyze those generalized Clifford elements that have a specific action on $p_1$, but the definition of Pauli 2-mixing only gives us information of the action on pairs of Pauli elements.  As such, let $p_2 \in \widehat{\paul}_n^d$ be such that $F(p_1,p_2) = 1$ and define $\mathbf{p} = (p_1,p_2)$.  (We only need $p_2$ to be some element of $\widehat{\paul}_n^d$ that differs from $p_1$, and this is the only element we can guarantee exists for all $n$ and $d$.)  
  
  For any $q_1\in \overline{\paul}_n^d$, if the action of a generalized Clifford element is to map $p_1$ to $q_1$, then the action of the element on $p_2$ must be to send it to a $q_2\in \overline{\paul}_n^d$ such that $(q_1,q_2)\in H_1$, since the generalized Clifford group preserves generalized Pauli elements and any conjugation map preserves commutation relations.  
  
 As such, let $q_1\in \overline{\paul}_n^d$.  We can use the above reasoning to decompose the set $\EE_{p_1\rightarrow q_1}$ by the action of the generalized Clifford elements on $p_2$.  In particular, we decompose $\EE_{p_1\rightarrow q_1}$ into sets $\EE_{(p_1,p_2)\rightarrow (q_1,q_2)}$ where $(q_1,q_2)\in H_1$.  Explicitly, we have:
  \begin{equation}
    \sum_{(\alpha,U)\in \EE_{p_1\rightarrow q_1}} \alpha 
    = \sum_{\substack{q_2\in \overline{\paul}_n^d:\\
        \mathbf{q}:=(q_1,q_2)\in H_{1}}} \sum_{(\alpha,U)\in \EE_{\mathbf{p}\rightarrow \mathbf{q}}} \alpha 
    = \sum_{\substack{q_2\in \overline{\paul}_n^d:\\
         (q_1,q_2)\in H_{1}}} \frac{1}{d^{2n+1}(d^{2n}-1)} 
    = \frac{1}{d(d^{2n}-1)},
  \end{equation}
  where we used the decomposition of $\EE_{p_1\rightarrow q_1}$ in the first equality, the fact that $\EE$ is Pauli 2-mixing in the second equality, and the fact that $F(q_1, q_2) = 1$ for $d^{2n}$ elements $q_2\in\overline{\paul}_n^d$ in the third equality.  Hence, $\EE$ is Pauli mixing.
\end{proof}

We will use this lemma to show that any ensemble $\EE$ on $n$ qubits that is Pauli-invariant and Pauli 2-mixing is a 3-design.  With a proof that the uniform ensemble over the generalized Clifford group is Pauli-invariant and Pauli 2-mixing, this will prove that the Clifford group is a 3-design.

\begin{lemma}
The uniform ensemble $\EE$ over the generalized Clifford group on $n$ qudits is Pauli-invariant and Pauli 2-mixing.
\label{lem:cliffords_are_2_mixing}
\end{lemma}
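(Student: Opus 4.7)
Pauli-invariance is immediate: since $\paul_n^d \subseteq \clif_n^d$, for any $p \in \paul_n^d$ the map $c \mapsto cp$ is a bijection on $\clif_n^d$, and the $e^{i\theta_U}$ in the definition of Pauli-invariance absorbs the ambiguity in choosing a representative of $Up$ in the quotient $\clif_n^d = \widetilde{\clif}_n^d/\{e^{i\theta}\II\}$. So the uniform ensemble $\EE$ is Pauli-invariant.

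For Pauli 2-mixing I would argue by orbit--stabilizer applied to the conjugation action of $\clif_n^d$ on $\big(\overline{\paul}_n^d\big)^2$, namely $c\cdot(p_1,p_2) := \big(cp_1c^\dag,\,cp_2c^\dag\big)$. Fix $\mathbf{p} = (p_1, p_2) \in \big(\widehat{\paul}_n^d\big)^2$ with $p_1 \not\propto p_2$ and set $\ell := F(p_1,p_2)$. Because conjugation preserves both the commutation relation and (non-)proportionality, the orbit of $\mathbf{p}$ is contained in $H_\ell$. If one can show that this orbit is \emph{all} of $H_\ell$, then orbit--stabilizer yields $|\clif_{\mathbf{p}\rightarrow\mathbf{q}}| = |\clif_n^d|/|H_\ell|$ for every $\mathbf{q} \in H_\ell$, and summing the uniform weights gives $\sum_{(\alpha,U)\in \EE_{\mathbf{p}\rightarrow\mathbf{q}}}\alpha = 1/|H_\ell|$, which is exactly the Pauli 2-mixing condition.

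The main obstacle is therefore the transitivity claim: for any $\mathbf{p},\mathbf{q}\in H_\ell$ there is a Clifford sending $\mathbf{p}$ to $\mathbf{q}$ under simultaneous conjugation. I would reduce this to a single-element statement in two stages. First, standard Pauli mixing of $\clif_n^d$ (the fact that conjugation acts transitively on $\overline{\paul}_n^d$, easily verified by exhibiting explicit Cliffords built from CNOTs, Fouriers, and phase gates that permute a generating set of Paulis) supplies a Clifford $c_1$ with $c_1 p_1 c_1^\dag = q_1$. Then one needs a Clifford centralizing $q_1$ that sends $c_1 p_2 c_1^\dag$ to $q_2$; equivalently, the stabilizer of $q_1$ in $\clif_n^d$ must act transitively on $\{r \in \overline{\paul}_n^d : F(q_1,r) = \ell,\ r \not\propto q_1\}$.

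In the symplectic picture, $\paul_n^d$ modulo phases is the $\mathbb{Z}_d$-module $\mathbb{Z}_d^{2n}$ equipped with the symplectic form $F$, and $\clif_n^d$ acts through a surjection onto the associated symplectic group. The required transitivity is then the standard fact that a symplectic group acts transitively on ordered non-proportional pairs with a prescribed symplectic inner product: extend the given pair to a symplectic basis of $\mathbb{Z}_d^{2n}$ via symplectic Gram--Schmidt (a Witt-type extension), and realize the resulting change of basis by an explicit Clifford. The mild subtlety that makes this the ``hard part'' is that for non-prime $d$ the module $\mathbb{Z}_d^{2n}$ is not a vector space; this is handled either by a Chinese remainder theorem decomposition into prime-power factors of $d$, or by direct construction of the needed stabilizer elements (e.g.\ conjugating by qudit-Fourier, phase-type, and multiplication gates that fix $q_1$ while acting nontrivially on its symplectic complement).
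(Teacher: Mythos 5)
Your proposal is correct and takes essentially the same route as the paper: the paper also reduces Pauli-invariance to $\paul_n^d\subseteq\clif_n^d$ and establishes 2-mixing by showing all fibers $\clif_{\mathbf{p}\rightarrow\mathbf{q}}$ for $\mathbf{q}\in H_{F(p_1,p_2)}$ have equal size via left-multiplication by a Clifford $c_0$ carrying one pair to another, which is exactly your orbit--stabilizer argument unwound. The only difference is one of emphasis: the paper simply asserts the existence of such a $c_0$ (i.e.\ transitivity of the conjugation action on $H_\ell$), whereas you correctly identify this as the substantive step and sketch its proof through the symplectic representation and a Witt-type basis extension.
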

\begin{proof}
  As the generalized Pauli group is a subgroup of the generalized Clifford group, we have that $\EE$ is Pauli-invariant.  Hence, we need only show that it is Pauli 2-mixing.
  
  Given $\mathbf{p}= (p_1,p_2)\in \big( \widehat{\paul}_{n}^d\big)^2$ such that $p_1\neq p_2$, let $\mathbf{q},\mathbf{r} \in H_{F(p_1,p_2)}$.  Since the components of $\mathbf{q}$ and $\mathbf{r}$ have the same commutation relation, there exists some generalized Clifford element $c_0$ that maps the components of $\mathbf{q}$ to the components of $\mathbf{r}$ (e.g., $c_0q_ic_0^\dag =r_i$ for $i\in\{1,2\}$).  We can then see that for all $c\in \clif_{\mathbf{p}\rightarrow \mathbf{q}}$, the element $c_0c$ satisfies $ c_0cp_ic^\dag c_0^\dag = c_0 q_i c_0^\dag =r_i,$ implying that $c_0c\in \clif_{\mathbf{p}\rightarrow \mathbf{r}}$ and $|\clif_{\mathbf{p}\rightarrow \mathbf{q}}| \leq |\clif_{\mathbf{p}\rightarrow \mathbf{r}}|$.  As these two elements were arbitrarily chosen, this implies that for all $\mathbf{q}\in H_{F(p_1,p_2)}$, $|\clif_{\mathbf{p}\rightarrow \mathbf{q}}| = |\clif_{\mathbf{p}\rightarrow \mathbf{p}}|$ (where we note that $\mathbf{p}\in H_{F(p_1,p_2)}$).
  
  Since each generalized Clifford element maps $p_1$ and $p_2$ to generalized Pauli elements that have the same commutation relations, we have
  \begin{equation}
    |\clif_n^d| = \sum_{\mathbf{q}\in H_{F(p_1,p_2)}} |\clif_{\mathbf{p}\rightarrow \mathbf{q}}| = |\clif_{\mathbf{p}\rightarrow \mathbf{p}}|  |H_{F(p_1,p_2)}|.
  \end{equation}
  Rearranging the terms then gives 
  \begin{equation}
    |\clif_{\mathbf{p}\rightarrow \mathbf{p}}| 
      = \frac{|\clif_n^d|}{|H_{F(p_1,p_2)}|}
      = |\clif_n^d| \begin{cases}\frac{1}{d(d^{2n}-1) (d^{2n}  -2d)} & F(p_1,p_2) = 0\\
       \frac{1}{d^{2n+1}(d^{2n}-1)} & F(p_1,p_2) \neq 0.
     \end{cases}
  \end{equation}
  
  Putting this together, we then have that for any $\mathbf{q} \in H_{F(p_1,p_2)}$,
  \begin{equation}
    \sum_{(\alpha,U)\in \EE_{\mathbf{p}\rightarrow \mathbf{q}}} \alpha 
      = \sum_{c\in \clif_{\mathbf{p}\rightarrow \mathbf{q}}} \frac{1}{|\clif_n^d|} 
      = \frac{|\clif_{\mathbf{p}\rightarrow \mathbf{q}}|}{|\clif_n^d|}
      = \frac{1}{|H_{F(p_1,p_2)}|}
      = \begin{cases} \frac{1}{d(d^{2n}-1)(d^{2n} - 2d)} & F(p_1,p_2) = 0\\
        \frac{1}{d^{2n+1}(d^{2n}-1)} & F(p_1,p_2) \neq 0.
     \end{cases}
  \end{equation}
  and the generalized Clifford group is Pauli 2-mixing.
\end{proof}


\section{Pauli 2-mixing on qubits implies 3-design\label{sec:proof}}

The proof proceeds by leveraging our knowledge of the action of $T_3$ on some well chosen $X$, by means of \propo{unit_twirl_preserves_permutation} and the linearity of $T_3$.  In particular, if $\EE$ is the ensemble of interest, we will show that for any $X\in \paul_n^{\otimes 3}$, $\Psi_{\EE,3}(X)$ is contained within the span of the permutation operators.  We can then use \propo{unitary_twirl_is_projective} to see that $\Psi_{\EE,3}(X) = T_3\big(\Psi_{\EE,3}(X)\big) = T_3(X)$.

The proof proceeds via case analysis, since the action of  $\Psi_{\EE,3}$ on $p_1\otimes p_2\otimes p_3$ depends on the relation between the three operators $p_1$, $p_2$, and $p_3$.  Note that we will only be working with qubit systems in this section, as these results do not hold for qudits.
\begin{lemma}
  If $\EE$ is a Pauli-invariant and Pauli 2-mixing $\clif_n$-ensemble, then $\EE$ is a unitary 3-design.
\label{lem:two_mixing_implies_3_design}
\end{lemma}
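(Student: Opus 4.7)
By the linearity of $\Psi_{\EE,3}$ and $T_3$, it suffices to prove $\Psi_{\EE,3}(X) = T_3(X)$ for each tensor product $X = p_1 \otimes p_2 \otimes p_3$ with $p_i \in \paul_n$, since these span $L\bigl((\CC^{2^n})^{\otimes 3}\bigr)$. My strategy is to show that $\Psi_{\EE,3}(X)$ always lies in $\mathrm{span}\{W_\pi : \pi \in S_3\}$; once this is established, \propo{unit_twirl_preserves_permutation} gives $T_3(\Psi_{\EE,3}(X)) = \Psi_{\EE,3}(X)$, while \propo{unitary_twirl_is_projective} gives $T_3(\Psi_{\EE,3}(X)) = T_3(X)$, so the two agree.

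The first step is to kill most triples via Pauli invariance. For any $r \in \paul_n$, Pauli invariance of $\EE$ lets us reindex the sum defining $\Psi_{\EE,3}(X)$ by $U \mapsto Ur$ (the accompanying global phase drops out of the conjugation), which is equivalent to replacing $X$ with $r^{\otimes 3} X r^{\otimes 3} = (-1)^{F(r,p_1)+F(r,p_2)+F(r,p_3)} X$ since each $p_i$ is a Hermitian representative with $p_i^2 = \II$. Thus $\Psi_{\EE,3}(X) = (-1)^{F(r, p_1 p_2 p_3)}\Psi_{\EE,3}(X)$, and requiring this for every $r$ forces $\Psi_{\EE,3}(X) = 0$ unless $p_1 p_2 p_3 \propto \II$ in $\widetilde{\paul}_n$. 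The surviving triples split into three cases.

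In case (a) all three are $\II$, so $X = W_e$. In case (b) exactly one $p_i$ equals $\II$, and the constraint forces the other two to coincide; Pauli mixing (available via \lem{2_mixing_implies_mixing}) then gives $\Psi_{\EE, 3}(X) = \frac{1}{|\widehat{\paul}_n|}\sum_{p \in \widehat{\paul}_n} p \otimes p \otimes \II$ (with the $\II$ in the appropriate slot), which rearranges into a combination of $W_e$ and the relevant transposition via the stated formula $W_{(ij)} = \frac{1}{2^n}\sum_{p \in \paul_n} p \otimes p^\dag \otimes \II$. In case (c) all three $p_i$ are non-identity, with $p_1 \neq p_2$ and $p_3 = \bar c\, p_1 p_2$ for a fixed scalar $\bar c$. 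Here Pauli 2-mixing makes $(Up_1 U^\dag, Up_2 U^\dag)$ uniform over $H_{F(p_1, p_2)}$, and multiplicativity of conjugation yields $Up_3 U^\dag = \bar c\,(Up_1 U^\dag)(Up_2 U^\dag)$, so $\Psi_{\EE,3}(X) = \frac{\bar c}{|H_{F(p_1, p_2)}|} \sum_{(q_1, q_2) \in H_{F(p_1, p_2)}} q_1 \otimes q_2 \otimes q_1 q_2$.

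The main obstacle is case (c): identifying this sum with a combination of permutation operators. I would work from the identities $W_{(123)} \pm W_{(321)} = \frac{1}{2^{2n}}\sum_{p_1, p_2 \in \paul_n} p_1 \otimes p_2 \otimes (p_2 p_1 \pm p_1 p_2)$, noting that $p_2 p_1 + p_1 p_2$ doubles $p_1 p_2$ on commuting pairs and vanishes on anti-commuting pairs (with the roles reversed for the difference). Restricting the right side from $\paul_n \times \paul_n$ down to $H_\ell$ requires subtracting off boundary pairs (pairs containing $\II$ or satisfying $p_1 = p_2$) as combinations of $W_e$, $W_{(12)}$, $W_{(13)}$, $W_{(23)}$, and accounting for the overcount-by-four from $\overline{\paul}_n = \{\pm p : p \in \widehat{\paul}_n\}$ (each sign pattern contributes the same tensor because every sign appears an even number of times). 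With this bookkeeping done, $\Psi_{\EE,3}(X)$ is a concrete combination of permutation operators, and the strategy from the first paragraph concludes.
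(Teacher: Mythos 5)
Your proposal is correct and follows essentially the same route as the paper: reduce to Pauli tensor products, use Pauli invariance to annihilate triples with $p_1p_2p_3\not\propto\II$, use Pauli (2-)mixing to make the surviving sums uniform, and express the results in the span of the $W_\pi$ before invoking \propo{unit_twirl_preserves_permutation} and \propo{unitary_twirl_is_projective}. The only difference is organizational --- you apply the invariance/vanishing argument as a global first filter (so the one-identity case collapses to $p\otimes p\otimes\II$ handled by plain Pauli mixing) where the paper instead treats $p_1\otimes p_2\otimes\II$ by citing the 2-design lemma --- and your bookkeeping for Case (c) matches the paper's combination of $W_{(123)}\pm W_{(321)}$ with transposition corrections.
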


\begin{proof}
Let us define the channel $\Phi_k(Y) := \Psi_{\EE,k}(Y)$ for all $Y\in L\big(\XX^{\otimes k}\big)$, and use the shorthand $\Phi := \Phi_3$ and $T := T_3$.  We need to show that for all $Y\in L\big(\mathcal{X}^{\otimes 3}\big)$, $\Phi(Y) = T(Y)$.  Using \propo{unitary_twirl_is_projective}, it suffices to show that $\Phi(Y) = T\big(\Phi(Y)\big)$, and by \propo{unit_twirl_preserves_permutation} showing that $\Phi(Y) = \sum_{\pi\in S_3}\alpha_\pi W_\pi$ is enough.  The linearity of the two maps ensures that we need only show this for $X \in \paul_n^{\otimes 3}$, as the Pauli group forms a basis for $L\big(\XX^{\otimes 3}\big)$.

Assuming that $X = p_1\otimes p_2 \otimes p_3$ for $p_i\in \paul_n$, there are three cases (up to permutations of the underlying registers) that we need to consider, corresponding to the different relations between the three $p_i$.


\begin{case}[$X = p_1\otimes p_2\otimes \II_{\XX}$]

Let us first assume that at least one of the $p_i$ is the identity operator on $\XX$.  Without loss of generality, we can assume that $p_3 = \II_{\mathcal{X}}$ and
\begin{align}
  \Phi(X) &=\sum_{(\alpha,U)\in \EE} \alpha U p_1 U^\dag \otimes Up_2 U^\dag \otimes U( \II_{\XX}) U^\dag
      =  \sum_{(\alpha,U)\in \EE} \alpha Up_1 U^\dag \otimes Up_2 U^\dag \otimes \II_{\XX}
      = \Phi_2(p_1 \otimes p_2) \otimes \II_{\XX}\\
    &= T_2(p_1 \otimes p_2) \otimes \II_{\XX}
      = \int d\eta(U) Up_1U^\dag \otimes U p_2 U^\dag \otimes UU^\dag
      = T(X),
\end{align}
where we have used \lem{pauli_mixing_2_design} and \lem{2_mixing_implies_mixing} to show that $\EE$ is a 2-design.
\end{case}


\begin{case}[$X = p_1 \otimes p_2 \otimes p_1p_2$]\label{case:cycle}

Now let us assume that $X = p_1\otimes p_2 \otimes p_1p_2$, with $p_1,p_2 \in \widehat{\paul}_n$ and $p_1\neq p_2$.  We note that
\begin{align}
  \Phi(X) &= \sum_{(\alpha,U)\in \EE} \alpha U p_1U^\dag \otimes U p_2 U^\dag \otimes U p_1 U^\dag U p_2 U^\dag\\
    &= \sum_{(q_1,q_2)\in H_{F(p_1,p_2)}} \sum_{(\alpha,U)\in \EE_{(p_1,p_2)\rightarrow (q_1,q_2)}} \alpha U p_1U^\dag \otimes U p_2 U^\dag \otimes U p_1 U^\dag U p_2 U^\dag\\
    &= \sum_{(q_1,q_2)\in H_{F(p_1,p_2)}} \frac{1}{|H_{F(p_1,p_2)}|} \;q_1 \otimes q_2 \otimes q_1q_2\\
    &= \frac{4}{|H_{F(p_1,p_2)}|} \sum_{\substack{q_1\neq q_2\in \widehat{\paul}_n\\
    								F(p_1,p_2) = F(q_1,q_2)}} q_1\otimes q_2 \otimes q_1q_2,
\end{align}
where we used the fact that Clifford operators can only map $(p_1,p_2)$ to elements of $H_{F(p_1,p_2)}$ in the second equality, the fact that $\EE$ is Pauli 2-mixing in the third equality, and the fact that we only ever use $q_1$ and $q_2$ an even number of times so that $\pm q_i$ are equivalent in the last equality.

At this point we should recall the expansion of the permutation operators $W_{\pi}$ in the Pauli basis:
\begin{equation}
  W_{(321)} = \frac{1}{2^{2n}}\sum_{q_1,q_2\in \paul_n} q_1\otimes q_2 \otimes q_1q_2,
\end{equation}
and
\begin{equation}
   W_{(123)} = \frac{1}{2^{2n}} \sum_{q_1,q_2\in\paul_n} q_1 \otimes q_2 \otimes q_2 q_1 = \frac{1}{2^{2n}} \sum_{\substack{q_1,q_2\in \paul_n\\F(q_1,q_2) = 0}} q_1\otimes q_2 \otimes q_1q_2 - \frac{1}{2^{2n}} \sum_{\substack{q_1,q_2\in \paul_n\\F(q_1,q_2) = 1}} q_1\otimes q_2\otimes q_1q_2.
\end{equation}

Combining these two operators (along with some additional permutation operators to remove terms with $q_i = \II_\XX$), we can see that
\begin{equation}
  \Phi(X) = \frac{4}{|H_{F(p_1,p_2)}|}\begin{cases}  2^{2n-1} \big(W_{(123)} + W_{(321)}\big) -2^n \big(W_{(12)} + W_{(23)} + W_{(13)}\big) + 2 \II_{\XX^{\otimes 3}} & F(p_1,p_2) = 0\\
  2^{2n-1} \big(W_{(321)} - W_{(123)}\big) & F(p_1,p_2) = 1.
  \end{cases}
\end{equation}
Noting that $\Phi(X)$ is then contained within the span of the permutation operators, we can then use \propo{unit_twirl_preserves_permutation} and the linearity of $T$ to show that $\Phi(X) = T\big(\Phi(X)\big)$.  If we then use \propo{unitary_twirl_is_projective}, we have that $\Phi(X) = T(X)$.
\end{case}


\begin{case}[$X = p_1\otimes p_2 \otimes p_3$]
Now let us assume that $X = p_1 \otimes p_2 \otimes p_3$ with $p_1p_2p_3\not\propto \II_\XX$ so that there exists some $r\in \widehat{\paul}_n$ satisfying $F(r,p_1p_2p_3) = 1$.  This implies that $r$ anti-commutes with either one or three of the $p_i$, which then gives
\begin{align}
  \Phi(X) &= \sum_{(\alpha,U)\in \EE} \alpha Up_1U^\dag \otimes U p_2 U^\dag \otimes U p_3 U^\dag\\
    &= \frac{1}{2} \sum_{(\alpha, U)\in \EE} \alpha \Big(Up_1U^\dag \otimes U p_2 U^\dag \otimes U p_3 U^\dag +  Urp_1rU^\dag \otimes Ur p_2r U^\dag \otimes Ur p_3r U^\dag\Big)\\
    &= \frac{1}{2}\sum_{(\alpha,U)\in\EE} \alpha Up_1U^\dag \otimes U p_2 U^\dag \otimes U p_3 U^\dag \big(1 + (-1)^{F(r,p_1) + F(r,p_2) + F(r,p_3)}\big) = 0,
\end{align} 
where we used the fact that $\EE$ is Pauli-invariant to split the sum in the second equality, and the fact that $F(r,p_1) +  F(r,p_2) + F(r,p_3) = 1 \Mod{2}$ to recombine it in the third equality.  Using the linearity of $T$ and \propo{unitary_twirl_is_projective} shows that $T(X) = \Phi(X)$.
\end{case}

As any Pauli element over $\XX^{\otimes 3}$ can be placed into one of the above three cases (after a suitable permutation of the underlying spaces), we have that $\Phi(X) = T(X)$ for all $X\in \paul_n^{\otimes 3}$, and $\EE$ is a 3-design. 
\end{proof}

As an immediate consequence using \lem{cliffords_are_2_mixing}, we have our titular theorem:

\begin{theorem}
The Clifford group forms a unitary 3-design.
\label{thm:cliffords_are_3_design}
\end{theorem}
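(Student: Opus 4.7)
The plan is to obtain this theorem as an immediate corollary by chaining the two lemmas that have just been assembled. Following the paper's convention (stated in the subsection on $k$-designs) that a finite set $\mathcal{S} \subset \unitary(\XX)$ treated as an ensemble means the uniform ensemble on $\mathcal{S}$, the statement ``the Clifford group forms a unitary 3-design'' is interpreted as the claim that the uniform ensemble $\EE$ over $\clif_n$ satisfies $\Psi_{\EE,3} = T_3$.

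First I would specialize \lem{cliffords_are_2_mixing} to $d = 2$. This yields in one stroke both properties required as hypotheses in \lem{two_mixing_implies_3_design}: the uniform ensemble over $\clif_n$ is Pauli-invariant (because $\paul_n \subset \clif_n$) and Pauli 2-mixing (by the orbit-counting argument in the proof of that lemma, which works uniformly in $d$). Second I would feed this ensemble into \lem{two_mixing_implies_3_design}, whose conclusion is precisely that any Pauli-invariant, Pauli 2-mixing $\clif_n$-ensemble is a unitary 3-design. Combining the two invocations closes the argument.

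The only subtlety worth flagging is the dimensional restriction. \lem{two_mixing_implies_3_design} is stated only for qubit systems, and indeed the case analysis in that proof exploits the fact that $F(p_1,p_2) \in \{0,1\}$ has exactly two values, which is special to $d = 2$. So when invoking \lem{cliffords_are_2_mixing} one must take $d = 2$ so that the resulting ensemble is a $\clif_n$-ensemble (not merely a $\clif_n^d$-ensemble). This matches the theorem statement, since ``the Clifford group'' in the title refers to $\clif_n$ on $n$ qubits; the qudit version of the theorem genuinely fails and is addressed separately in \sec{gen_Pauli_not_design}. Beyond flagging this dimensional point, there is no mathematical obstacle: the proof is a two-line deduction.
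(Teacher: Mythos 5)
Your proposal is correct and matches the paper's own derivation exactly: the theorem is stated there as an immediate consequence of \lem{cliffords_are_2_mixing} (specialized to $d=2$) fed into \lem{two_mixing_implies_3_design}. Your flag about the dimensional restriction is a sensible sanity check but introduces no new content beyond what the paper already does implicitly.
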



\section{The Clifford group is not a 4-design\label{sec:not_4_design}}

With the rather surprising result that the uniform distribution over the Clifford group forms a 3-design, one might wonder whether it also forms a $k$-design for larger $k$.  We can immediately see that this is false on one qubit, based on the size of $\clif_1$ and the minimal size of a 4-design.  However, this argument does not generalize to $n$ qubits for $n>1$ and leaves open the possibility that as $n$ grows, the uniform distribution over Clifford group becomes a better approximation to Haar-random unitaries.

We close this possibility, and show that the Clifford group only forms a 3-design.  We actually show a stronger statement, in that any 4-design requires the use of a non-Clifford unitary.  The main idea behind this proof is that any ensemble of Clifford elements only takes Pauli elements to Pauli elements.  If we examine the action of the 4-fold Clifford twirl on an operator of the form $p^{\otimes 4}$, then the image will only have support on Pauli elements of the form $q^{\otimes 4}$.  By showing that this subspace is not contained within the span of the permutation operators, we can use \propo{unit_twirl_spanned_by_permutation} to claim that the ensemble is not a 4-design.

Note that we again assume that we are working with qubits.  
\begin{lemma}
  If $\EE$ is a $\clif_n$-ensemble, then $\EE$ is not a 4-design.
\end{lemma}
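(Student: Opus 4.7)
The plan is to show that for any Clifford ensemble $\EE$ and any $p \in \widehat{\paul}_n$, the operators $\Psi_{\EE,4}(p^{\otimes 4})$ and $T_4(p^{\otimes 4})$ are distinct. Since every Clifford $U$ satisfies $UpU^\dag = \pm q$ for some $q \in \widehat{\paul}_n$ and the sign is wiped out by the fourth tensor power,
\[
\Psi_{\EE,4}(p^{\otimes 4}) = \sum_q \beta_q\, q^{\otimes 4} \quad\text{with}\quad \sum_q \beta_q = 1,
\]
so $\Psi_{\EE,4}(p^{\otimes 4})$ is a nonzero element of the subspace $V := \operatorname{span}\{q^{\otimes 4} : q \in \widehat{\paul}_n\}$. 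Meanwhile \propo{unit_twirl_spanned_by_permutation} places $T_4(p^{\otimes 4})$ in $W := \operatorname{span}\{W_\pi : \pi \in S_4\}$, so it suffices to prove $V \cap W = \{0\}$.

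The first step is to shrink the intersection by computing the orthogonal projection of each $W_\pi$ onto $V$. The vectors $q^{\otimes 4}$ are pairwise orthogonal with squared norm $2^{4n}$, and the standard cycle identity $\tr(q^{\otimes 4} W_\pi) = \prod_{c} \tr(q^{|c|})$, combined with $\tr(q) = 0$ and $\tr(q^k) = 2^n$ for even $k$ (valid on qubits since $q^2 = \II$), gives
\[
\bigl\langle q^{\otimes 4}, W_\pi\bigr\rangle = \begin{cases} 2^{n\, c(\pi)} & \pi \text{ has no fixed point,} \\ 0 & \text{otherwise,} \end{cases}
\]
where $c(\pi)$ is the number of cycles in $\pi$. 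Crucially this value is independent of $q$, so $\Pi_V(W_\pi)$ is a scalar multiple of $\Sigma := \sum_{q \in \widehat{\paul}_n} q^{\otimes 4}$ for every $\pi$. Any $w \in V \cap W$ satisfies $w = \Pi_V(w)$, hence $V \cap W \subseteq \operatorname{span}\{\Sigma\}$.

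The main step --- and the main obstacle --- is to show $\Sigma \notin W$; by \thm{watrous} this reduces to finding a single unitary $U$ with $[\Sigma, U^{\otimes 4}] \neq 0$. I plan to take the non-Clifford rotation $U := e^{i\theta \sigma_X/2}$ acting on the first qubit (identity on the rest), which fixes $\sigma_X$ while $U\sigma_Y U^\dag = \cos\theta\,\sigma_Y - \sin\theta\,\sigma_Z$ and $U\sigma_Z U^\dag = \sin\theta\,\sigma_Y + \cos\theta\,\sigma_Z$. In the Pauli expansion of $U^{\otimes 4}\Sigma (U^\dag)^{\otimes 4}$, I read off the coefficient of the target string $(\sigma_Y \otimes \II^{\otimes(n-1)})^{\otimes 2} \otimes (\sigma_Z \otimes \II^{\otimes(n-1)})^{\otimes 2}$. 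Since $U$ is trivial on qubits $2,\ldots,n$, only summands with $q = q_1 \otimes \II^{\otimes(n-1)}$ contribute; among these $q_1 = \sigma_X$ yields $(\sigma_X)^{\otimes 4}$ (zero on the target), while $q_1 \in \{\sigma_Y, \sigma_Z\}$ each produce a coefficient of $\sin^2\theta\cos^2\theta$, summing to $2\sin^2\theta\cos^2\theta$. Setting $\theta = \pi/4$ gives $\tfrac{1}{2}$, whereas the same coefficient in $\Sigma$ itself is $0$. Hence $\Sigma$ does not commute with $U^{\otimes 4}$, so $\Sigma \notin W$; the argument closes because $V \cap W = \{0\}$ while $\Psi_{\EE,4}(p^{\otimes 4}) \in V \setminus \{0\}$ and $T_4(p^{\otimes 4}) \in W$, forcing $\EE$ to fail as a $4$-design.
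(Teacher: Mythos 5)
Your proof is correct, and while it opens exactly as the paper does --- twirl $X=p^{\otimes 4}$, observe that the Clifford image lies in $V=\operatorname{span}\{q^{\otimes 4}: q\in\widehat{\paul}_n\}$ with a nonzero (probability-vector) coefficient, and that $T_4(X)$ lies in $W=\operatorname{span}\{W_\pi\}$ by \propo{unit_twirl_spanned_by_permutation} --- the way you separate the two subspaces is genuinely different. The paper never isolates the statement $V\cap W=\{0\}$; instead it expands $\sum_{\pi=(abc4)}W_\pi$ and $\sum_{\pi=(ab)(cd)}W_\pi$ in the Pauli basis, tests $T_4(X)$ against $r_1\otimes r_1\otimes r_2\otimes r_2$ and $r_1\otimes r_2\otimes r_1\otimes r_2$ for anticommuting $r_1,r_2$ (where $\Psi_{\EE,4}(X)$ manifestly vanishes) to force $\alpha_4=\alpha_{2,2}=0$, and plays that against the nonvanishing of $\langle r_0^{\otimes 4},T_4(X)\rangle$. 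You instead compute $\Pi_V(W_\pi)$ via the cycle trace identity, note that $\langle q^{\otimes 4},W_\pi\rangle=2^{nc(\pi)}$ is independent of $q$ so that $V\cap W\subseteq\operatorname{span}\{\Sigma\}$, and then kill $\Sigma$ by exhibiting a single non-Clifford rotation with $[\Sigma,U^{\otimes 4}]\neq 0$, invoking the equivalence in \thm{watrous}. Both routes rest on the same heavy machinery (the double commutant theorem behind \thm{watrous}), and I checked your rotation computation ($U\sigma_YU^\dag=\cos\theta\,\sigma_Y-\sin\theta\,\sigma_Z$, coefficient $2\sin^2\theta\cos^2\theta$ on the target string, versus $0$ in $\Sigma$): it is right. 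What your version buys is a cleaner structural statement ($V\cap W=\{0\}$, with the intersection pinned to a single line before being eliminated) and independence from the explicit Pauli expansions of the fixed-point-free $W_\pi$; what the paper's version buys is that all computations stay inside the Pauli basis, which is the same toolkit used everywhere else in the argument, and it identifies concretely \emph{which} test operators witness the failure.
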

\begin{proof}
  Let $p\in \widehat{\paul}_n$ and let us examine the action of $\Psi_{\EE,4}$ on $X = p^{\otimes 4}$.  We have that
  \begin{equation}
    \Psi_{\EE,4} (X) = \sum_{(\alpha,U)\in \EE} \alpha \big(U pU^\dag\big)^{\otimes 4} = \sum_{q\in \widehat{\paul}_n} \beta_q q^{\otimes 4} \qquad\text{ where} \qquad \beta_q = \sum_{(\alpha,U)\in \EE_{p\rightarrow \pm q}} \alpha .
  \end{equation}
  As the $\alpha$ form a probability vector and as each term in the sum uses $q_i$ an even number of times, we have that each $\beta_q$ is a sum of nonnegative terms and thus the $\beta_q$ also form a probability vector.  In particular, there must exist some $r_0\in \widehat{\paul}_n$ such that $\beta_{r_0} \neq 0$.  Additionally, we have that for any $Y\in \paul_{4n}$, $\big\langle Y, \Psi_{\EE,4}(X)\big\rangle = 0$ unless $Y = q^{\otimes 4}$ for some $q\in \widehat{\paul}_n$.  This is the key feature that will allow us to show $\Psi_{\EE,4}(X)\neq T_4(X)$.
  
  Let us now examine $T_4(X)$.  Using \propo{unit_twirl_spanned_by_permutation}, we have that
  \begin{equation}
    T_4(X) = \sum_{\pi\in S_4} \alpha_{\pi}(X) W_\pi.
  \end{equation} 
  Making use of the fact that for each $\pi\in S_4$ and each $U\in \unitary\big(\CC^{2^n}\big)$, the operator $W_\pi$ satisfies $[W_\pi,X] = 0$ and $[W_\pi,U^{\otimes 4}] = 0$, we can see that 
  \begin{equation}
    T_4(X) = T_4\big(W_\pi X W_\pi^\dag\big) = W_\pi T_4(X) W_\pi^\dag
  \end{equation}
  for all $\pi\in S_4$.  This shows that $\alpha_\pi(X)$ depends only on the conjugacy class of $\pi$, so let us label these coefficients by the length of their non-identity cycles (e.g., $\alpha_4$, $\alpha_{2,2}$, $\alpha_3$, $\alpha_{2}$, and $\alpha_{\emptyset}$).
  
  With this knowledge, let us examine the sum of $W_\pi$ for the order-4 permutations:
  \begin{align}
    \sum_{\pi = (abc4)} W_\pi &= \frac{1}{2^{3n}}\sum_{q_1,q_2,q_3\in \paul_n} q_1\otimes q_2\otimes q_3\otimes \big( q_1q_2q_3 + q_1 q_3 q_2 + q_2 q_1q_3 + q_2q_3q_1 + q_3 q_1q_2 + q_3q_2q_1\big)\\
    &= \frac{6}{2^{3n}} \sum_{\substack{q_1,q_2,q_3\in \paul_n\\ F(q_i,q_j) = 0}} q_1\otimes q_2\otimes q_3 \otimes q_1q_2q_3 
       -\frac{2}{2^{3n}}\sum_{\substack{q_1,q_2,q_3\in \paul_n\\
		\sum_{i < j}F(q_i,q_j) = 2}} (-1)^{F( q_1,q_3)}q_1\otimes q_2 \otimes q_3 \otimes q_1q_2 q_3      .
  \end{align}
  Similarly, if we look at the sum of $W_\pi$ for $\pi$ a pair of commuting cycles, we have that
  \begin{align}
    \sum_{\pi = (ab)(cd)} W_\pi &= \frac{1}{2^{2n}} \sum_{q_1,q_2\in \paul_n} q_1\otimes q_1 \otimes q_2 \otimes q_2 + q_1\otimes q_2\otimes q_1 \otimes q_2  + q_1 \otimes q_2 \otimes q_2 \otimes q_1.
  \end{align}
  
  Let us note that for any $Y\in \widehat{\paul}_{n}^{\otimes 4}$ and any $\pi\in S_4$, if $\big\langle Y,W_\pi\big\rangle \neq 0$, then $\pi$ must either be rank-4 or a pair of commuting cycles, as every other permutation has a fixed point.  (The fixed point forces the decomposition of $W_\pi$ in the Pauli basis to have one of the $q_i$ always equal the identity.)
  
  Now let us assume that $\Psi_{\EE,4}(X) = T_4(X)$.  Under this assumption, at least one of $\alpha_4$ or $\alpha_{2,2}$ must be nonzero, as there exists some $r_0\in \widehat{\paul}_n$ such that 
  \begin{equation}
   \big\langle r_0^{\otimes 4} , T_{4}(X)\big\rangle = \frac{6}{2^{3n}} \alpha_4 + \frac{3}{2^{2n}} \alpha_{2,2} =  \big\langle r_0^{\otimes 4} , \Psi_{\EE,4}(X)\big\rangle =  \beta_{r_0} \neq 0.
  \end{equation}
  
  Now let $r_1,r_2\in \widehat{\paul}$ satisfy $F(r_1,r_2)=1$.  Note that
  \begin{align}
    \Big\langle r_1\otimes r_1\otimes r_2\otimes r_2, \sum_{\pi = (abc 4)} W_\pi\Big \rangle &= \frac{2}{2^{3n}}, 
    & \Big\langle r_1\otimes r_2\otimes r_1\otimes r_2, \sum_{\pi = (abc 4)} W_\pi\Big \rangle &= - \frac{2}{2^{3n}},\\
    \Big\langle r_1\otimes r_1\otimes r_2\otimes r_2, \sum_{\pi = (ab)(cd)} W_\pi\Big \rangle &= \frac{1}{2^{2n}},
    & \Big\langle r_1\otimes r_2\otimes r_1\otimes r_2, \sum_{\pi = (ab)(cd)} W_\pi\Big \rangle &= \frac{1}{2^{2n}}.
  \end{align}
  With these equalities, we can see
  \begin{align}
    \big\langle r_1\otimes r_1\otimes r_2\otimes r_2, T_4(X)\big\rangle&= \frac{2}{2^{3n}} \alpha_{4} + \frac{1}{2^{2n}} \alpha_{2,2},
     & \big\langle r_1\otimes r_2\otimes r_1\otimes r_2, T_4(X)\big\rangle &= -\frac{2}{2^{3n}} \alpha_{4} + \frac{1}{2^{2n}} \alpha_{2,2},
  \end{align}
  while
  \begin{align}
    \big\langle r_1\otimes r_1\otimes r_2\otimes r_2, \Psi_{\EE,4}(X)\big\rangle&= 0,
     & \big\langle r_1\otimes r_2\otimes r_1\otimes r_2, \Psi_{\EE,4}(X)\big\rangle &= 0.
  \end{align}
  If we assume that $\Psi_{\EE,4}(X) = T_4(X)$, then $\alpha_4 = 0$ and $\alpha_{2,2} = 0$.  However, we have already shown that this assumption requires least one of these two coefficients to be nonzero.  Hence, our assumption leads to a contradiction, and $\EE$ is not a 4-design.  
\end{proof}

\begin{theorem}
  The Clifford group is not a unitary 4-design.
\end{theorem}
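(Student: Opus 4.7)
The plan is to invoke the preceding lemma, which has already done all the real work. The uniform distribution over $\clif_n$ is, by definition, a $\clif_n$-ensemble in the sense used in the lemma (take $\EE = \{(1/|\clif_n|, c)\}_{c \in \clif_n}$), so the theorem will follow by a one-line application of that lemma to this particular $\EE$.

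Viewed more conceptually, the lemma is strictly stronger than the theorem: it rules out every weighted ensemble supported on Clifford elements from being a 4-design, not merely the uniform one. The theorem is simply the named consequence one wants on record, since the Clifford group is the canonical ensemble of interest. The structural reason the argument succeeds is that Clifford conjugation preserves the Pauli group up to sign, so the image $\Psi_{\EE,4}(p^{\otimes 4})$ is pinned inside the span of tensors $q^{\otimes 4}$ with $q \in \widehat{\paul}_n$. By \propo{unit_twirl_spanned_by_permutation}, $T_4(p^{\otimes 4})$ must instead lie in the span of the permutation operators $W_\pi$, and expanding the rank-$4$ and product-of-two-transposition permutations in the Pauli basis shows that $T_4(p^{\otimes 4})$ carries nonzero weight on Pauli tensors such as $r_1 \otimes r_1 \otimes r_2 \otimes r_2$ with anti-commuting $r_1, r_2$. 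That is the mismatch the lemma exploits.

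There is therefore no remaining obstacle at the level of the theorem itself: the only step required is the trivial observation that the uniform distribution over $\clif_n$ meets the hypothesis of the lemma, after which the conclusion is immediate.
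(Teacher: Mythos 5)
Your proposal is correct and matches the paper exactly: the theorem is stated as an immediate corollary of the preceding lemma, obtained by taking $\EE$ to be the uniform ensemble over $\clif_n$, which is trivially a $\clif_n$-ensemble. Your summary of the mechanism inside the lemma (Clifford conjugation pins $\Psi_{\EE,4}(p^{\otimes 4})$ to the span of tensors $q^{\otimes 4}$, while $T_4(p^{\otimes 4})$ must have support on terms like $r_1\otimes r_1\otimes r_2\otimes r_2$) is also faithful to the paper's argument.
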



\section{The generalized Clifford group is not a 3-design\label{sec:gen_Pauli_not_design}}

After becoming aware of the work of Zhu \cite{Zhu15}, the author wanted to know where the proof of \lem{two_mixing_implies_3_design} fails for generalized Clifford operators, and whether the proof technique using the $k$-fold twirl channels could be used as an alternative proof of Zhu's results.  In particular, the author was interested in where the proof used the fact that the operations were on qubits as opposed to qudits.  

The key place we used this fact was \cas{cycle} of \lem{two_mixing_implies_3_design}, where we used the two order-3 permutation operators to cancel the commuting and anti-commuting terms of the form $q_1\otimes q_2 \otimes q_1q_2$.  When we generalize this result to qudits, there are $d$ different commutation possibilities as opposed to two, and the proof fails as we cannot span a $d$-dimensional space using only two vectors (unless $d=2$).

\begin{lemma}
If $\EE$ is a $\clif_n^d$-ensemble for $d$ not a power of 2, then $\EE$ is not a 3-design.
\label{lem:generalized_2_mixing_not_3_design}
\end{lemma}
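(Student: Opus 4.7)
The plan is to adapt the approach of \sec{not_4_design}: exhibit a Pauli test operator $X$ whose image under $\Psi_{\EE,3}$ is confined to a proper subspace determined by commutation data that Clifford conjugation preserves, and then use \propo{unit_twirl_spanned_by_permutation} to force $T_3(X)$ outside that subspace. Mirroring \cas{cycle} of \lem{two_mixing_implies_3_design}, I would take
\begin{equation}
  X = p_1 \otimes p_2 \otimes p_2^\dag p_1^\dag
\end{equation}
for distinct $p_1, p_2 \in \widehat{\paul}_n^d$ that are not inverse to one another (so $p_2^\dag p_1^\dag$ is also nontrivial), and set $\ell := F(p_1, p_2)$. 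The reason for this specific third factor is phase cancellation: if a generalized Clifford $U$ sends $p_i \mapsto \omega^{a_i} q_i$, then $U(p_2^\dag p_1^\dag)U^\dag = \omega^{-a_1 - a_2} q_2^\dag q_1^\dag$, whose phase exactly cancels the $\omega^{a_1 + a_2}$ from the first two factors, giving $U^{\otimes 3} X (U^\dag)^{\otimes 3} = q_1 \otimes q_2 \otimes q_2^\dag q_1^\dag$ with $F(q_1, q_2) = \ell$. Hence $\Psi_{\EE, 3}(X)$ is supported entirely on $\{q_1 \otimes q_2 \otimes q_2^\dag q_1^\dag : F(q_1, q_2) = \ell\}$.

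Next, I would analyze $T_3(X) = \sum_{\pi \in S_3} \alpha_\pi W_\pi$ via inner products with probe Paulis. For any $Y$ containing an identity tensor factor, $\langle Y, T_3(X)\rangle = 0$, because inside the Haar integral one of the three traces becomes $\tr(p_i) = 0$ or $\tr(p_1 p_2) = 0$ (the latter uses $p_1 \not\propto p_2^\dag$). Consequently, on Pauli basis elements without identity factors only $W_{(123)}$ and $W_{(321)}$ can contribute. Using $q_2 q_1 = \omega^{-F(q_1, q_2)} q_1 q_2$, one computes
\begin{equation}
  W_{(321)} = \frac{1}{d^2} \sum_{q_1, q_2 \in \paul_n^d} \omega^{F(q_1, q_2)} q_1 \otimes q_2 \otimes q_2^\dag q_1^\dag,
\end{equation}
to be compared with $W_{(123)} = \frac{1}{d^2} \sum q_1 \otimes q_2 \otimes q_2^\dag q_1^\dag$. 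For $Y = r_1 \otimes r_2 \otimes r_2^\dag r_1^\dag$ with $r_1, r_2, r_2^\dag r_1^\dag$ all nontrivial and $F(r_1, r_2) = m$, a direct inner-product calculation then gives
\begin{equation}
  \langle Y, T_3(X)\rangle = d^{3n - 2}\big(\alpha_{(123)} + \alpha_{(321)} \omega^m\big).
\end{equation}

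If $\Psi_{\EE, 3}(X) = T_3(X)$, this last quantity must vanish for every $m \in [d]\setminus\{\ell\}$. Because $d$ is not a power of $2$ we have $d \geq 3$, so there exist at least two distinct $m_1, m_2 \in [d]\setminus\{\ell\}$; the pair of equations $\alpha_{(123)} + \alpha_{(321)} \omega^{m_i} = 0$ has $\omega^{m_1} \neq \omega^{m_2}$ and forces $\alpha_{(123)} = \alpha_{(321)} = 0$. Then $T_3(X)$ lies in the span of $W_e, W_{(12)}, W_{(13)}, W_{(23)}$, each of whose Pauli expansion consists solely of terms with an identity factor; combined with the earlier vanishing on identity-factor probes, this forces every Pauli coefficient of $T_3(X)$ to be zero, so $T_3(X) = 0$. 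But self-adjointness of $T_3$ together with \propo{unit_twirl_preserves_permutation} yields $\langle W_{(123)}, T_3(X)\rangle = \langle W_{(123)}, X\rangle = d^{3n - 2} \neq 0$, a contradiction. The main obstacle is the phase bookkeeping in the second paragraph: correctly extracting the relative $\omega^{F(q_1, q_2)}$ factor between the Pauli expansions of $W_{(321)}$ and $W_{(123)}$ is exactly where the \emph{$d$ commutation classes versus $2$ cyclic permutations} obstruction crystallizes, with the rest being routine linear algebra.
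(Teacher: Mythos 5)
Your proposal is correct and follows essentially the same route as the paper's proof: the same test operator $p_1\otimes p_2\otimes p_2^\dag p_1^\dag$, the same observation that a Clifford twirl confines the image to a single commutation class while only $W_{(123)}$ and $W_{(321)}$ survive on identity-free Pauli probes, and the same use of two probes with distinct commutation values $m\neq\ell$ to force $\alpha_{(123)}=\alpha_{(321)}=0$. The only cosmetic differences are that you close the contradiction via self-adjointness of $T_3$ and $\langle W_{(123)},X\rangle\neq 0$, where the paper instead argues directly that $\Psi_{\EE,3}(X)\neq 0$ from nonnegativity of the $\gamma_{\mathbf{q}}$, and that your constant $d^{3n-2}$ should read $d^{n}$ under the convention $\langle A,B\rangle=\tr(A^\dag B)$ --- neither affects the argument.
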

\begin{proof}
  Let us examine the action of $\Phi_{\EE,3}$ on $X = p_1\otimes p_2\otimes p_2^\dag p_1^\dag$, where $F(p_1,p_2) = 1$.  More concretely, we have that $p_1 p_2 = \omega p_2 p_1$, where $\omega$ is the $d$-th primitive root of unity.
  
  We can use the fact that $\EE$ is a $\clif_n^d$-ensemble to see that
  \begin{equation}
    \Phi_{\EE,3}(X) = \sum_{\mathbf{q}\in H_1} \beta_{\mathbf{q}}\; q_1\otimes q_2\otimes q_2^\dag q_1^\dag \qquad \text{ where } \qquad \beta_{\mathbf{q}} = \sum_{(\alpha,U)\in \EE_{\mathbf{p}\rightarrow \mathbf{q}}} \alpha,
  \end{equation}
  as generalized Clifford elements can only map $(p_1,p_2)$ to elements of $H_1$.  Further, since both $q_i$ and $q_i^\dag$ are used an equal number of times in the decomposition, the phase of the $q_i$ does not matter in the expansion and we have
  \begin{equation}
    \Phi_{\EE,3}(X) = \sum_{\substack{\mathbf{q}\in (\widehat{\paul}_n^d)^2\\
      F(q_1,q_2) = 1}} \gamma_{\mathbf{q}}\; q_1\otimes q_2\otimes q_2^\dag q_1^\dag \qquad \text{ where} \qquad \gamma_{\mathbf{q}} = \sum_{i,j=0}^{d-1} \beta_{(\omega^i q_1,\omega^j q_2)} .
  \end{equation}
  Note that the $\gamma_{\mathbf{q}}$ are each a nonnegative sum of $\alpha$ terms, and that each $\alpha$ appears in exactly one decomposition of the $\gamma_{(q_1,q_2)}$.  As the $\alpha$ form a probability vector, at least one of the $\gamma_{\mathbf{q}}>0$ and $\Phi_{\EE,3}(X) \neq 0$.  We can also see that if $\big\langle Y, \Phi_{\EE,3}(X)\big\rangle\neq 0$ then $Y = q_1\otimes q_2\otimes q_2^\dag q_1^\dag$, with $F(q_1,q_2) = 1$. 
  
  If we now examine $T_3$, we can use \propo{unit_twirl_spanned_by_permutation} to see that
  \begin{equation}
    T_3(X) = \sum_{\pi\in S_3} \alpha_\pi(X) W_\pi,
  \end{equation}
  for some $\alpha \in \CC^{S_3}$.  Let us recall the decomposition of the order-3 permutation operators in the generalized Pauli basis:
  \begin{align}
    W_{(123)} &= \frac{1}{d^{2n}}\sum_{q_1,q_2\in \paul_n^d} q_1 \otimes q_2 \otimes q_2^\dag q_1^\dag, \\
    W_{(321)} &= \frac{1}{d^{2n}}\sum_{q_1,q_2\in\paul_n^d} q_1 \otimes q_2 \otimes q_1^\dag q_2^\dag = \frac{1}{d^{2n}} \sum_{\ell = 0}^{d-1} \omega^\ell\sum_{\substack{q_1,q_2\in \paul_n^d\\ F(q_1,q_2) = \ell}} q_1 \otimes q_2 \otimes q_2^\dag q_1^\dag.
  \end{align}
  
  It will be useful to note that if $Y\in \big(\widehat{\paul}_n^d\big)^{\otimes 3}$ and $\big\langle Y, W_\pi\big\rangle \neq 0$, then $\pi$ must be a rank-3 permutation, as every other permutation of three elements has a fixed point.  With this fact and our characterization of $\Phi_{\EE,3}(X)$ and $T_3(X)$, we have that if $\Phi_{\EE,3}(X) = T_3(X)$ then at least one of $\alpha_{(123)}$ or $\alpha_{(321)}$ must be nonzero.  
    
  Now, let $r_1,r_2\in \widehat{\paul}_n^d$ satisfy $r_1^2\not\propto \II$ and $F(r_1,r_2) = 2$.  By inspecting the above decompositions, we can see 
  \begin{align}
    \big\langle r_1\otimes r_1\otimes \big(r_1^2\big)^\dag, W_{(123)}\big \rangle &= \frac{1}{d^{2n}},
        &\big\langle r_1\otimes r_2\otimes r_2^\dag r_1^\dag, W_{(123)}\big \rangle &= \frac{1}{d^{2n}},\\
    \big\langle r_1\otimes r_1\otimes \big(r_1^2\big)^\dag, W_{(321)}\big \rangle &= \frac{1}{d^{2n}},
       & \big\langle r_1\otimes r_2\otimes r_2^\dag r_1^\dag, W_{(321)}\big \rangle &= \frac{\omega^2}{d^{2n}}.
  \end{align}
  Using the fact that $W_{(123)}$ and $W_{(321)}$ are the only permutation operators having nonzero support over $\big(\widehat{\paul}_n^d\big)^{\otimes 3}$, we can then examine the channels:
  \begin{align}
    \big\langle r_1\otimes r_1\otimes \big(r_1^2\big)^\dag, T_3(X)\big \rangle &= \frac{1}{d^{2n}}\big(\alpha_{(123)} + \alpha_{(321)}\big) ,&\big\langle r_1\otimes r_2\otimes r_2^\dag r_1^\dag, T_3(X)\big \rangle &= \frac{1}{d^{2n}}\big(\alpha_{(123)} + \omega^2 \alpha_{(321)}\big),\\
    \big\langle r_1\otimes r_1\otimes \big(r_1^2\big)^\dag, \Psi_{\EE,3}(X)\big \rangle &= 0, &\big\langle r_1\otimes r_2\otimes r_2^\dag r_1^\dag, \Psi_{\EE,3}(X)\big \rangle &= 0.
  \end{align}
  Hence, if $T_3(X) = \Psi_{\EE,3}(X)$, we require $\alpha_{(123)} = \alpha_{(321)} = 0$.  However, we have already shown that this assumption requires at least one of these two coefficients to be nonzero.  Hence the two channels cannot be equal and $\EE$ is not a 3-design.
\end{proof}

\begin{theorem}
The generalized Clifford group on qudits for $d$ not a power of $2$ is not a unitary 3-design.
\end{theorem}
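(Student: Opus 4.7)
The theorem is an immediate corollary of \lem{generalized_2_mixing_not_3_design}, which was stated for an arbitrary $\clif_n^d$-ensemble precisely so that this step would be trivial. My plan is simply to take the ensemble $\EE$ that assigns probability $1/|\clif_n^d|$ to each element of the generalized Clifford group. Every element of $\EE$ lies in $\clif_n^d$, so $\EE$ is a $\clif_n^d$-ensemble in the sense of the preceding subsection, and applying the lemma under the hypothesis that $d$ is not a power of $2$ concludes that $\EE$ is not a 3-design. This is exactly the statement of the theorem, so no further case analysis or calculation is required.

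It is worth recording why the obstruction survives passing to the uniform distribution on $\clif_n^d$, since it clarifies that the hypothesis on $d$ and not the choice of weights is what forces the conclusion. The witness in the lemma is $X = p_1 \otimes p_2 \otimes p_2^\dag p_1^\dag$ with $F(p_1,p_2) = 1$. Any Clifford twirl $\Psi_{\EE,3}(X)$ is supported only on operators of the form $q_1 \otimes q_2 \otimes q_2^\dag q_1^\dag$ with $F(q_1,q_2) = 1$, because generalized Clifford elements preserve the commutation class of an ordered pair of Pauli elements. By \propo{unit_twirl_spanned_by_permutation}, however, $T_3(X)$ is a linear combination of the permutation operators $W_\pi$, and the only $W_\pi$ with nonzero overlap on $\big(\widehat{\paul}_n^d\big)^{\otimes 3}$ are $W_{(123)}$ and $W_{(321)}$, each of which spreads mass across all $d$ commutation classes. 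Once $d \geq 3$, two rank-$3$ permutation operators cannot be combined to place weight exclusively on the $F = 1$ class while vanishing on the $F \neq 1$ classes, which is what the lemma exploits to derive a contradiction. Because this obstruction is structural and independent of the probability vector defining $\EE$, the uniform distribution over $\clif_n^d$ inherits it, and the theorem follows.
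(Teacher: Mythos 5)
Your proposal is correct and matches the paper exactly: the theorem is stated as an immediate consequence of \lem{generalized_2_mixing_not_3_design}, obtained by taking $\EE$ to be the uniform ensemble over $\clif_n^d$, which is trivially a $\clif_n^d$-ensemble. Your additional paragraph is an accurate recap of why the lemma holds, but it is not needed for this deduction.
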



\section{Discussion\label{sec:discussion}}

Our results explain why the Clifford group has been such a good approximation to the Haar-random unitaries: it was actually a better approximation than previously known.  These results might also explain why finding exact $k$-designs has been difficult; the prototypical example of a 2-design was actually a 3-design.  Hence, any attempt to generalize the Clifford group to find $k$-designs was attempting to extrapolate from the wrong data.

One possible way to improve these results is to find a Pauli 2-mixing ensemble other than the uniform ensemble over the Clifford group.  It is known that the size of the Clifford group grows like $2^{\Theta(n^2)}$, while the lower bounds on the size of a 3-design grows like $2^{\Omega(n)}$.  This large discrepancy in size gives weight to the conjecture that the uniform Clifford ensemble is not optimal; a 3-design should not require as large a set as the Clifford group.  However, \cite{Zhu15} has shown through some clever group theory manipulations that no proper subgroup of the Clifford group satisfies Pauli 2-mixing (except on two qubits).  Hence, no smaller Clifford ensemble can be the uniform distribution over a group.  However, this leaves open the possibility that the uniform distribution on some proper subset of the Clifford group still satisfies Pauli 2-mixing.

On the other hand, the results showing that the generalized Clifford group does not form a 3-design and that the Clifford group does not form a 4-design lends credence to the idea that we should abandon sending Pauli elements to Pauli elements.  If we wish to find $k$-designs for arbitrary $k$, we might first look for 2- and 3-designs that utilize unitaries outside the Clifford group.  Once these small designs are found, we might be able to generalize them to find $k$-designs for arbitrary $k$.

Finally, an application of an exact 3-design would be useful. Our results do give a simple proof that the stabilizer states form a state 3-design since they are simply the orbit of a pure state under the Clifford group, but this was already shown in \cite{KG13}.  The only other application of unitary 3-designs is shown in \cite{BH13}, in which they proved that a large fraction of approximate 3-designs are $\beta n$-dispersing (and thus so too are a large fraction of the Clifford group).  Finding an application for $k$-designs would thus be an interesting avenue of research.

\vspace{-.25cm}
\section*{Acknowledgments}

The author would like to thank Olivia Di Matteo for many helpful discussions and input on the early versions of this manusrcipt, and Huangjun Zhu for sharing a preprint copy of his result.

This work was supported by the Ontario Ministry of Training, Colleges, and Universities.


\bibliographystyle{myhamsplain}
\bibliography{cliffords3design}

\providecommand{\bysame}{\leavevmode\hbox to3em{\hrulefill}\thinspace}
\begin{thebibliography}{10}

\bibitem{AE07}
Andris Ambainis and Joseph Emerson, \emph{Quantum t-designs: t-wise
  independence in the quantum world}, Twenty-Second Annual IEEE Conference on
  Computational Complexity, 2007, pp.~129--140, IEEE, 2007,
  \mbox{\href{http://arxiv.org/abs/arXiv:quant-ph/0701126}{arXiv:quant-ph/0701126}}.

\bibitem{BCHHKM12}
Fernando~G.S.L. Brand\~ao, Piotr \'{C}wikli\'{n}ski, Micha\l{} Horodecki,
  Pawe\l{} Horodecki, Jaros\l{}aw~K. Korbicz, and Marek Mozrzymas,
  \emph{Convergence to equilibrium under a random {H}amiltonian}, Phys. Rev. E
  \textbf{86} (2012), 031101,
  \mbox{\href{http://arxiv.org/abs/arXiv:1108.2985}{arXiv:1108.2985}}.

\bibitem{BHH12}
Fernando~G.S.L. {Brand\~{a}o}, Aram~W. Harrow, and {Micha{\l}} Horodecki,
  \emph{Local random quantum circuits are approximate polynomial-designs},
  (2012), \mbox{\href{http://arxiv.org/abs/arXiv:1208.0692}{arXiv:1208.0692}}.

\bibitem{BH13}
Fernando~G.S.L. {Brand\~{a}o} and {Micha{\l}} Horodecki, \emph{Exponential
  {Q}uantum {S}peed-ups are {G}eneric}, Quantum Information \& Computation
  \textbf{13} (2013), no.~11-12, 901--924,
  \mbox{\href{http://arxiv.org/abs/arXiv:1010.3654}{arXiv:1010.3654}}.

\bibitem{CRSS96}
A.~Robert Calderbank, Eric~M. Rains, Peter~W. Shor, and Neil J.~A. Sloane,
  \emph{{Quantum {E}rror {C}orrection via Codes over GF(4)}}, Proceedings of
  the 1997 IEEE International Symposium on Information Theory, p.~292, IEEE,
  1997,
  \mbox{\href{http://arxiv.org/abs/arXiv:quant-ph/9608006}{arXiv:quant-ph/9608006}}.

\bibitem{CLLW15}
Richard Cleve, Debbie Leung, Li~Liu, and Chunhao Wang, \emph{Near-linear
  constructions of exact unitary 2-designs},  (2015),
  \mbox{\href{http://arxiv.org/abs/arXiv:1501.04592}{arXiv:1501.04592}}.

\bibitem{DCEL09}
Christoph Dankert, Richard Cleve, Joseph Emerson, and Etera Livine, \emph{Exact
  and approximate unitary 2-designs and their application to fidelity
  estimation}, Phys. Rev. A \textbf{80} (2009), 012304,
  \mbox{\href{http://arxiv.org/abs/arXiv:quant-ph/0606161}{arXiv:quant-ph/0606161}}.

\bibitem{DLT02}
David~P. DiVincenzo, Debbie~W. Leung, and Barbara~M. Terhal, \emph{Quantum data
  hiding}, IEEE Transactions on Information Theory \textbf{48} (2002), no.~3,
  580--598,
  \mbox{\href{http://arxiv.org/abs/arXiv:quant-ph/0103098}{arXiv:quant-ph/0103098}}.

\bibitem{DHCB05}
Wolfgang D\"ur, Marc Hein, J.~Ignacio Cirac, and Hans~J. Briegel,
  \emph{Standard forms of noisy quantum operations via depolarization}, Phys.
  Rev. A \textbf{72} (2005), 052326,
  \mbox{\href{http://arxiv.org/abs/arXiv:quant-ph/0507134}{arXiv:quant-ph/0507134}}.

\bibitem{EWSLC03}
Joseph Emerson, Yaakov~S. Weinstein, Marcos Saraceno, Seth Lloyd, and David~G.
  Cory, \emph{Pseudo-{R}andom {U}nitary {O}perators for {Q}uantum {I}nformation
  {P}rocessing}, Science \textbf{302} (2003), no.~5653, 2098--2100.

\bibitem{Far14}
Jacob~M. Farinholt, \emph{An ideal characterization of the {C}lifford
  operators}, Journal of Physics A: Mathematical and Theoretical \textbf{47}
  (2014), no.~30, 305303,
  \mbox{\href{http://arxiv.org/abs/arXiv:1307.5087}{arXiv:1307.5087}}.

\bibitem{Got99}
Daniel Gottesman, \emph{The {H}eisenberg {R}epresentation of {Q}uantum
  {C}omputers}, Group22: Proceedings of the XXII International Colloquium on
  Group Theoretical Methods in Physics, pp.~32--43, 1999,
  \mbox{\href{http://arxiv.org/abs/arXiv:quant-ph/9807006}{arXiv:quant-ph/9807006}}.

\bibitem{GAE07}
David Gross, Koenraad Audenaert, and Jens Eisert, \emph{Evenly distributed
  unitaries: On the structure of unitary designs}, Journal of Mathematical
  Physics \textbf{48} (2007), no.~5,
  \mbox{\href{http://arxiv.org/abs/arXiv:quant-ph/0611002}{arXiv:quant-ph/0611002}}.

\bibitem{GKK15}
David Gross, Felix Krahmer, and Richard Kueng, \emph{Improved {R}ecovery
  {G}uarantees for {P}hase {R}etrieval from {C}oded {D}iffraction {P}atterns},
  Applied and Computational Harmonic Analysis (2015),
  \mbox{\href{http://arxiv.org/abs/arXiv:1402.6286}{arXiv:1402.6286}}.

\bibitem{DJNR09}
Aram~W. Harrow and Richard~A. Low, \emph{Efficient {Q}uantum {T}ensor {P}roduct
  {E}xpanders and k-{D}esigns}, Approximation, Randomization, and Combinatorial
  Optimization. Algorithms and Techniques (Irit Dinur, Klaus Jansen, Joseph
  Naor, and Jose Rolim, eds.), Lecture Notes in Computer Science, vol. 5687,
  Springer Berlin Heidelberg, 2009,
  \mbox{\href{http://arxiv.org/abs/arXiv:0811.2597}{arXiv:0811.2597}},
  pp.~548--561.

\bibitem{HL09}
\bysame, \emph{Random {Q}uantum {C}ircuits are {A}pproximate 2-designs},
  Communications in Mathematical Physics \textbf{291} (2009), no.~1, 257--302,
  \mbox{\href{http://arxiv.org/abs/arXiv:0802.1919}{arXiv:0802.1919}}.

\bibitem{HHWY08}
Patrick Hayden, Micha\l{} Horodecki, Andreas Winter, and Jon Yard, \emph{A
  decoupling approach to the quantum capacity}, Open Systems \& Information
  Dynamics \textbf{15} (2008), no.~01, 7--19,
  \mbox{\href{http://arxiv.org/abs/arXiv:quant-ph/0702005}{arXiv:quant-ph/0702005}}.

\bibitem{HLSW04}
Patrick Hayden, Debbie Leung, Peter~W. Shor, and Andreas Winter,
  \emph{Randomizing {Q}uantum {S}tates: {C}onstructions and {A}pplications},
  Communications in Mathematical Physics \textbf{250} (2004), no.~2, 371--391,
  \mbox{\href{http://arxiv.org/abs/arXiv:quant-ph/0307104}{arXiv:quant-ph/0307104}}.

\bibitem{HLW06}
Patrick Hayden, Debbie~W. Leung, and Andreas Winter, \emph{Aspects of {G}eneric
  {E}ntanglement}, Communications in Mathematical Physics \textbf{265} (2006),
  no.~1, 95--117,
  \mbox{\href{http://arxiv.org/abs/arXiv:quant-ph/0407049}{arXiv:quant-ph/0407049}}.

\bibitem{HW08}
Patrick Hayden and Andreas Winter, \emph{Counterexamples to the {M}aximal
  p-{N}orm {M}ultiplicativity {C}onjecture for all p > 1}, Communications in
  Mathematical Physics \textbf{284} (2008), no.~1, 263--280,
  \mbox{\href{http://arxiv.org/abs/arXiv:0807.4753}{arXiv:0807.4753}}.

\bibitem{KG13}
Richard Keung and David Gross, \emph{Qubit stabilizer states are complex
  projective 3-designs}, poster presented at the 16th Workshop on Quantum
  Information Processing, 2013.

\bibitem{KG15}
\bysame, \emph{Qubit stabilizer states are complex projective 3-designs},
  (2015),
  \mbox{\href{http://arxiv.org/abs/arXiv:1510.02767}{arXiv:1510.02767}}.

\bibitem{Low10}
Richard~A. {Low}, \emph{{Pseudo-randomness and {L}earning in {Q}uantum
  {C}omputation}}, Ph.D. thesis, University of Bristol, 2010,
  \mbox{\href{http://arxiv.org/abs/arXiv:1006.5227}{arXiv:1006.5227}}.

\bibitem{MRA13}
Llu\'{\i}s Masanes, Augusto~J. Roncaglia, and Antonio Ac\'{\i}n,
  \emph{Complexity of energy eigenstates as a mechanism for equilibration},
  Phys. Rev. E \textbf{87} (2013), 032137,
  \mbox{\href{http://arxiv.org/abs/arXiv:1108.0374}{arXiv:1108.0374}}.

\bibitem{Pres98}
John Preskill, \emph{Reliable quantum computers}, Proceedings of the Royal
  Society of London A: Mathematical, Physical and Engineering Sciences
  \textbf{454} (1998), no.~1969, 385--410,
  \mbox{\href{http://arxiv.org/abs/arXiv:quant-ph/9705031}{arXiv:quant-ph/9705031}}.

\bibitem{RS09}
Aidan Roy and Andrew~J. Scott, \emph{Unitary designs and codes}, Designs, Codes
  and Cryptography \textbf{53} (2009), no.~1, 13--31,
  \mbox{\href{http://arxiv.org/abs/arXiv:0809.3813}{arXiv:0809.3813}}.

\bibitem{Scott06}
Alex~J. Scott, \emph{Tight informationally complete quantum measurements},
  Journal of Physics A: Mathematical and General \textbf{39} (2006), no.~43,
  13507,
  \mbox{\href{http://arxiv.org/abs/arXiv:quant-ph/0604049}{arXiv:quant-ph/0604049}}.

\bibitem{Watrous}
John Watrous, \emph{Theory of {Q}uantum {I}nformation}, 2015.

\bibitem{Zhu15}
Huangjun Zhu, \emph{Multiqubit {C}lifford groups are unitary 3-designs},
  (2015),
  \mbox{\href{http://arxiv.org/abs/arXiv:1510.02619}{arXiv:1510.02619}}.

\end{thebibliography}

\appendix
\section{Pauli-Mixing implies 2-design\label{app:pauli_mixing_2_design}}
We will prove \lem{pauli_mixing_2_design} in this appendix.  The proof of this lemma has a similar spirit to our proof of \lem{two_mixing_implies_3_design}.  Namely, we examine the action of the 2-fold twirl by the ensemble in question, and show that its image is contained within the span of the permutation operators.  Using  \propo{unitary_twirl_is_projective}, \propo{unit_twirl_preserves_permutation}, and the linearity of $T_2$, we can then claim that two channels have the same action on every Pauli element, which shows that the ensemble is a 2-design.

\pauliMixingTwoDesign*

\begin{proof}
  Let $\XX = \CC^{d^n}$, and let us define the channel $\Phi(Y) := \Psi_{\EE,2}(Y)$ for all $Y\in L\big(\XX^{\otimes 2}\big)$.  We need to show that for all $Y\in L\big(\XX^{\otimes 2}\big)$, $T_2(Y) = \Phi(Y)$.  Using \propo{unitary_twirl_is_projective}, it suffices to show that $\Phi(Y) = T_2(Y)$, and by \propo{unit_twirl_preserves_permutation} it suffices to show that $\Phi(Y) = \alpha \II_{\XX^{\otimes 2}} + \beta W_{(12)}$.  The linearity of the two maps ensures that we need only show this for $X\in \big(\paul_{n}^d\big)^{\otimes 2}$, as the generalized Pauli group forms a basis for $L\big(\XX^{\otimes 2}\big)$.
  
  Assuming that $X = p_1\otimes p_2$ for $p_1,p_2\in \paul_n^d$, there are three cases that we need to consider, corresponding to the different relations between $p_1$ and $p_2$.
  
\begin{case}[$X = \II_{\XX^{\otimes 2}}$] Let us first assume that $X = \II_{\XX^{\otimes 2}}$.  Direct inspections shows $\Phi(X) = \II_{\XX^{\otimes 2}} = T_2(X)$.
\end{case}

\begin{case}[$X = p\otimes p^\dag$]  Let us next assume that $X = p\otimes p^\dag$ where $p\in \widehat{\paul}_n^d$.  With this assumption, we have:
\begin{align}
  \Phi(X) &= \sum_{(\alpha,U)\in \EE} \alpha U p U^\dag \otimes U p^\dag U^\dag 
    = \sum_{q\in \overline{\paul}_n^d} \sum_{(\alpha,U)\in \EE_{p\rightarrow q}} \alpha q\otimes q^\dag
            = \sum_{q\in \overline{\paul}_n^d} \frac{1}{d(d^{2n}-1)} \;q \otimes q^\dag\\
  &
    = \frac{1}{d^{2n} - 1}\sum_{q\in \widehat{\paul}_n^d} q\otimes q^\dag,
\end{align}
where the second equality follows from the fact that $p$ can be mapped to any $q\in \overline{\paul}_n^d$, the third from the fact that $\EE$ is Pauli mixing, and the fourth from the fact that $q$ and $q^\dag$ both appear in equal powers, so that the phase does not affect our result.  

If we now recall the expansion of $W_{(12)}$ in the generalized Pauli basis, we can see
\begin{equation}
  W_{(12)} = \frac{1}{d^n}\sum_{p\in \paul_n^d} p\otimes p^\dag 
  \qquad \text{ and thus }\qquad
  \Phi(X) = \frac{1}{d^{2n}-1} \big(d^n \;W_{(12)} - \II_{\XX^{\otimes 2}}\big).
\end{equation}
Using \propo{unitary_twirl_is_projective}, \propo{unit_twirl_preserves_permutation}, and the linearity of $T_2$, we have that $T_2(X) = \Phi(X)$.
\end{case}

\begin{case}[$X = p_1\otimes p_2$] Let us finally assume that $X = p_1\otimes p_2$ where $p_1p_2 \not\propto \II_\XX$.  With this assumption, there exists some $r\in \widehat{\paul}_n^d$ such that $F(r,p_1p_2) = 1 = F(r,p_1) + F(r,p_2) \Mod d$.  We can then see
  \begin{align}
    \Phi(X) &= \sum_{(\alpha,U)\in \EE} \alpha U p_1 U^\dag \otimes U p_2 U^\dag 
       = \frac{1}{d} \sum_{\ell = 1}^d \sum_{(\alpha,U)\in \EE} \alpha Ur^\ell p_1r^{-\ell} U^\dag \otimes Ur^\ell  p_2 r^{-\ell} U^\dag \\
       &= \frac{1}{d} \sum_{\ell = 1}^d \omega^\ell \sum_{(\alpha,U)\in \EE} \alpha Up_1 U^\dag \otimes U  p_2 U^\dag 
       =  \Phi(X) \sum_{\ell=1}^d \frac{\omega^\ell}{d} 
       = 0,
  \end{align}
where we used the fact that $\EE$ is Pauli-invariant to split the sum in the second equality, the fact that $F(r,p_1) + F(r,p_2) \Mod{ d}= 1$ to recombine the sum in the third equality, and the fact that $\omega$ is a $d$-th root of unitary in the fifth equality.  Using \propo{unitary_twirl_is_projective} and the linearity of $T_2$ then gives that $T_2(X) = \Phi(X)$.
\end{case}

As any $X\in \paul_{2n}^d$ can be placed into one of these three cases, we have that $T_2(X) = \Phi(X)$, and $\EE$ is a 2-design.
\end{proof}

\end{document}